\documentclass[12pt,a4paper]{article}
\usepackage[T1]{fontenc}
\usepackage{fancyhdr}
\usepackage{amsmath,amsthm,amsfonts,amssymb}
\allowdisplaybreaks
\usepackage{epic,eepic}
\usepackage{graphicx}
\usepackage{authblk}
\usepackage{hyperref}
\usepackage{fullpage}
\usepackage[active]{srcltx}
\usepackage{enumitem}
\usepackage{booktabs}
\usepackage{mathtools}

\usepackage{framed,color}

\usepackage{float}%
\floatstyle{plaintop}%
\restylefloat{table}

\usepackage[font={small,it}]{caption}

\usepackage[table]{xcolor}

\usepackage[toc,page]{appendix}
\graphicspath{{Figures/}}

\usepackage[style=authoryear,backend=bibtex,maxcitenames=2,maxbibnames=99]{biblatex}
\bibliography{LPM}

\usepackage{chngcntr}
\usepackage{dsfont}

\usepackage{algpseudocode}
\usepackage{algorithm}
\usepackage{setspace}

\usepackage{tikz}
\usetikzlibrary{bayesnet}
\usetikzlibrary{decorations.pathreplacing}

\makeatletter
\AtEveryBibitem{%
  \global\undef\bbx@lasthash%
  \clearfield{}}
\makeatother

\theoremstyle{plain}
\newtheorem{theorem}[]{Theorem}
\newtheorem{proposition}[]{Proposition}
\newtheorem{lemma}[]{Lemma}
\newtheorem{corollary}[]{Corollary}
\theoremstyle{definition}

\theoremstyle{remark}
\newtheorem{remark}{Remark}

\numberwithin{equation}{section}
\numberwithin{figure}{section}
\theoremstyle{plain}
\newtheorem{assumption}{Assumption}

\usepackage{color}
\def\eg{\textit{e.g.\,}}

\def\bpsi{\boldsymbol{\psi}}
\def\btheta{\boldsymbol{\theta}}
\def\Acal{\mathcal{A}}
\def\Scal{\mathcal{S}}

\def\uQ{\underline{Q}}
\def\bQ{\overline{Q}}
\def\upi{\underline{\pi}}
\def\bpi{\overline{\pi}}
\def\bz{\mathbf{z}}

\def\tP{\tilde{P}}
\def \rmd{\mathrm{d}}

\date{}
\title{\textbf{Computationally efficient inference for latent position network models}}
\author[1,*]{Riccardo Rastelli}
\author[2]{Florian Maire}
\author[1,3]{Nial Friel}
\affil[1]{\footnotesize School of Mathematics and Statistics, University College Dublin, Dublin, Ireland;}
\affil[2]{\footnotesize Department of Mathematics and Statistics, University of Montreal, Montreal, Canada;}
\affil[3]{\footnotesize Insight Centre for Data Analytics, Dublin, Ireland.\vspace{0.25cm}}
\affil[*]{\footnotesize riccardo.rastelli@ucd.ie}

\begin{document}
\rowcolors{2}{gray!25}{white}
\counterwithout{figure}{section}
\counterwithout{figure}{subsection}
\counterwithout{equation}{section}
\counterwithout{equation}{subsection}

\maketitle
\begin{abstract}
\noindent
Latent position models are widely used for the analysis of networks in a variety of research fields.
In fact, these models possess a number of desirable theoretical properties, and are particularly easy to interpret.
However, statistical methodologies to fit these models generally incur a computational cost which grows with the square of the number of nodes in the graph.
This makes the analysis of large social networks impractical.
In this paper, we propose a new method characterised by a much reduced computational complexity, which can be used to fit latent position models on networks of several tens of thousands nodes.
Our approach relies on an approximation of the likelihood function, where the amount of noise introduced by the approximation can be arbitrarily reduced at the expense of computational efficiency.
We establish several theoretical results that show how the likelihood error propagates to the invariant distribution of the Markov chain Monte Carlo sampler.
In particular, we demonstrate that one can achieve a substantial reduction in computing time and still obtain a good estimate of the latent structure.
Finally, we propose applications of our method to simulated networks and to a large coauthorships network, highlighting the usefulness of our approach.
\\

\noindent
{\bf Keywords:}
network analysis; latent position models; noisy Markov chain Monte Carlo; Bayesian inference; social networks.
\end{abstract}

\baselineskip=20pt
\section{Introduction}\label{sec:intro}
In the last few decades, network data has become extremely common and readily available in a variety of fields, including the social sciences, biology, finance and technology.
After the pioneering work of \textcite{hoff2002latent}, latent position models (hereafter LPMs) have become one of the cornerstones in the statistical analysis of networks.
LPMs are flexible models capable of capturing many salient features of realised networks while providing results which can be easily interpreted.
However, a crucial aspect in the statistical analyses of networks is scalability: the computational burden required when fitting LPMs generally grows with the square of the number of nodes.
This seriously hinders their applicability, since estimation becomes impractical for networks larger than a few hundreds nodes.
Here, we precisely address this issue by introducing a new scalable methodology to fit LPMs: we study the new approach by providing theoretical guarantees on its efficiency, and we illustrate its use on simulated and real datasets.

LPMs postulate that the nodes of an observed network are characterised by a unique random position in a latent space: in the most common setup, each node is mapped to a point of $\mathbb{R}^2$. Additionally, the probability of observing an edge between two nodes is determined by the corresponding pairwise latent distance. A common assumption requires that closer nodes are more likely to connect than nodes farther apart, or, equivalently, that the probability of connection $\rho\left( d_{ij} \right)$ is a non-increasing function of the distance $d_{ij}$ between nodes $i$ and $j$. Evidently, the aforementioned quadratic computing costs originate from the necessity of keeping track of all of the pairwise distances between the nodes.

In our approach, we construct a partition of the latent space, therefore inducing a partition on the nodes of the graph itself. This allows us to cluster together nodes that are expected to have approximately the same behaviour, with regard to their connections. In principle, this is similar to imposing a stochastic block model structure \parencite{wang1987}, whereby the nodes belonging to the same block are assumed to be \textit{stochastically equivalent} \parencite{nowicki2001estimation}. The crucial advantage of our approach is that, once the partitioning has been set up, we can bypass the calculation of the pairwise distances via an approximation, in fact reducing to a computational complexity (in the number of nodes) that is lower than than of the standard methods.

Similarly to the original paper of \textcite{hoff2002latent}, our approach also relies on Markov chain Monte Carlo (hereafter MCMC) to obtain a Bayesian posterior sample of the latent positions and other model parameters.
However, in contrast to their approach, we replace the likelihood of the LPM with an approximate (hence \textit{noisy}) counterpart that aggregates the latent position of nodes belonging to the same block.
By construction, the cost of the calculation of this surrogate likelihood grows linearly in the number of nodes, hence giving a significant computational advantage to our method when compared to the approach of \textcite{hoff2002latent}
or other subsequent related works.

Since the LPM likelihood is replaced by a proxy, our method broadly fits within the context of \textit{noisy} Markov chain Monte Carlo \parencite{alquier2016noisy},
a topic that has recently generated a noticeable interest within the field of computational statistics and beyond.
The theoretical aspect of our paper relies and builds upon the core ideas of noisy MCMC.
In particular, our methodology is supported by a collection of theoretical results showing that our approach leads to quantifiable gains in efficiency.
More precisely, we show that the error in the MCMC output induced by the likelihood approximation can be arbitrarily bounded by refining the partition in the latent space.
Besides, a finer partition also implies higher computational costs.
As a consequence, our algorithm allows a trade-off between speed and accuracy that can be set according to the available computational budget, and the level of precision required for inference.
We study in detail how this approximation error is affected by the fineness of the partition as the number of nodes increases, hence providing a detailed characterisation of the trade-off.
In addition, our theoretical developments include a proposition that can be regarded as an extension of the results of \textcite{alquier2016noisy} to the widely used Metropolis-within-Gibbs (MwG hereafter) algorithm,
and which may thus have applications beyond the context of LPMs.
The theoretical results are established for a generic LPM framework: the assumptions we use are rather general and encompass most of the commonly used LPMs.

In addition to these results, we propose applications of our method to both simulated and real datasets, whereby we focus on a more specific model which is equivalent to that of \textcite{hoff2002latent}.
Our simulation study aims at assessing the approximation error bounds from a much more practical perspective, highlighting the validity of the procedure in asymptotic settings, and providing useful indications on how one should set up the partitioning.
The study demonstrates that the noisy algorithm succeeds in recovering the latent structure correctly, achieving the same qualitative results obtained with the currently available approaches.
Crucially, the computing time required by our proposed approach is only a fraction of that of the non-noisy one.

Finally, we propose an application to a large social network representing coauthorships in the astrophysics category of the repository of electronic preprints, arXiv.
This application demonstrates that our approach can be successfully employed on very large\footnote{This is meant in comparison to previous other analyses in the LPM literature.} networks, to recover the structure of the latent space while using just a small fraction of the actual computational cost.
This effectively extends the applicability of latent position models to much larger network datasets.

The structure of the paper is as follows: in Section \ref{sec:related_literature} we give an overview of the literature related to LPMs and noisy Markov chain Monte Carlo.
In Section \ref{sec:LPMs}, we formally characterise the main features of the original LPM of \textcite{hoff2002latent}, giving an overview of the MwG sampling strategy used to perform inference, highlighting some of its limitations.
In Section \ref{sec:Assumptions}, we lay the foundations for our theoretical results, by defining the general assumptions that our LPMs must satisfy.
In Section \ref{sec:GridStructure}, we formally introduce the partitioning of the latent space and all of the associated notation.
Section \ref{sec:noisy_mcmc} introduces the novel noisy algorithm, whereas in Section \ref{sec:theoretical_guarantees} we describe the main theoretical results. Finally, Sections \ref{sec:experiments} and \ref{sec:coauthorship} illustrate the applications of our methodology to simulated and a real dataset, respectively.

\section{Review of related literature}\label{sec:related_literature}
The study of the mathematical properties of LPMs dates back at least to \textcite{gilbert1961random}.
However, the first application of these models in the statistical analysis of social networks is due to \textcite{hoff2002latent}, who introduced a feasible methodology to fit LPMs to interaction data.
Since the work of \textcite{hoff2002latent}, LPMs have been intensively studied and widely applied to a variety of contexts, becoming one of the prominent statistical models for network analyses.
There are a number of reasons for this success.
Most importantly, LPMs are particularly easy to interpret, and offer a clear and intuitive graphical representation of the results.
In addition, LPMs are capable of capturing a number of features of interest such as transitivity, clustering, homophily and assortativity, which are often exhibited by observed social networks.
An overview of the theoretical properties of realised LPMs is given in \textcite{rastelli2015properties}.

In order to increase the flexibility of these models, a number of extensions of the basic framework have been considered.
\textcite{handcock2007model} introduce a more sophisticated prior on the latent point process to represent clustering in the network, that is, the presence of communities.
\textcite{krivitsky2009representing} further extends the model to include nodal random effects, i.e. additional latent features on the nodes capable of tuning their in-degrees and out-degrees.
Both of these extensions are implemented in the \texttt{R} package \texttt{latentnet}.

LPMs have also been extended to account for multiple network views \parencite{gollini2014joint,durante2017nonparametric,salter2017latent},
binary interactions evolving over time \parencite{sarkar2006dynamic, sewell2015latent, friel2016, durante2016locally},
ranking network data \parencite{gormley2007latent, sewell2015analysis} and weighted networks \parencite{sewell2016latent}.
Review papers dealing with LPMs include \textcite{salter2012review}, \textcite{matias2014modeling} and \textcite{raftery2017comment}.

Similarly to our contribution, three other papers address the issue of scalability for the inference on LPMs.
In \textcite{salter2013variational}, the authors propose a variational approximation (coupled with first order Taylor expansions to deal with various intractabilities) to perform posterior maximisation for the model described by \textcite{handcock2007model}.
One drawback of this approach is that it is not possible to assess the magnitude of the error induced by the variational approximation.
Also, the modelling assumptions are not flexible, since the variational framework can only be used with a restricted selection of parametric distributions.

In \textcite{ryan2017bayesian}, the authors consider the same latent position clustering model, and propose a Gaussian finite mixture prior distribution on the latent point process that allows one to \textit{collapse} the posterior distribution.
This means that several model parameters can be analytically integrated out from the posterior distribution of the model, hence simplifying the sampling scheme and achieving better estimators with a smaller computational cost.

Finally, \textcite{raftery2012fast} proposes a case-control likelihood approximation for the LPM with nodal random effects.
In this paper, the authors argue that the majority of large social networks are sparse, hence, missing edges contribute the most to the LPM likelihood.
By analogy with the case-control idea from epidemiology, they estimate the likelihood value using only a subset of the contributions given by the missing edges.
We consider this approach similar to ours, since both methods rely on a noisy likelihood.
We point out that our algorithm benefits from a series of theoretical results that guarantee its correctness and characterise the error induced by the approximation.
In addition, our method may be applied to networks of potentially huge size regardless of the level of sparseness.

Regarding the theoretical analysis of our algorithm, the main reference that we relate to is \textcite{alquier2016noisy}.
These authors argue that the computational problems arising when dealing with intractable likelihoods, or when inferring very large datasets, can often be alleviated by introducing approximations in the MCMC schemes.
These approaches are generally referred to as noisy MCMC, since one ends up sampling using a noisy transition kernel, rather than the correct one.
In \textcite{alquier2016noisy}, the authors exploit a theoretical result from \textcite{mitrophanov2005sensitivity} to characterise the error induced by these approximations on the invariant distribution of the transition kernel.
They also propose several applications based on the Metropolis-Hastings algorithm to a number of relevant statistical modelling frameworks.
We also point out that, more recently,
the noisy Monte Carlo framework has been adopted by \textcite{boland2017efficient} and \textcite{maire2017informed},
as a means to speed up inference for Gibbs random fields and other general models.
Even though the literature on noisy MCMC has been recently enriched by a number of relevant entries \parencite{negrea2017error,johndrow2017error,rudolf2017perturbation},
the theoretical framework developed in \textcite{alquier2016noisy} proved sufficient to establish our results, as shown in Section \ref{sec:theoretical_guarantees}.

\section{Latent Position Models}\label{sec:LPMs}
\subsection{Definition}\label{sec:LPM_definition}
A random graph is an object $\mathcal{G} = \left\{\mathcal{V},\mathcal{E}\right\}$ where $\mathcal{V} = \left\{1,\dots,N\right\}$ is a fixed set of labels for the nodes and
$\mathcal{E}$ is a list of the randomly realised edges.
In the social sciences, for example, random graphs are used to represent the social interactions within a set of actors.
The values appearing on the undirected ties are modeled through the random variables:
\begin{equation}
 \mathcal{Y} = \left\{ Y_{ij}: i,j\in\mathcal{V},\ i<j\right\}.
\end{equation}
In this paper we only deal with undirected binary graphs, hence, the observed realisations are denoted as follows:
\begin{equation}
\rowcolors{1}{}{}
 y_{ij}=\begin{cases}
  1,&\mbox{if an edge between $i$ and $j$ appears;}\\
  0,&\mbox{otherwise;}
 \end{cases}
\end{equation}
for every $i\in\mathcal{V}$ and $j\in\mathcal{V}$ such that $j>i$.
Note that, in the framework considered, self-edges are not modelled.

In LPMs the nodes are characterised by a latent position, generically denoted $\textbf{z}\in\mathbb{R}^m$, which determines their social profile.
The choice $m=2$ is the most common since it usually couples a good fit and a convenient framework to represent the results.
Hence, we illustrate our methodology assuming that the number of latent dimensions is two, noting that extensions to other cases may be possible.

In the basic LPM, the probability of an edge appearing is determined by the positions of the nodes at its extremes and by some other global parameters (e.g. an intercept).
This may be formally written as follows:
\begin{equation}
 p\left( \textbf{z}_i, \textbf{z}_j; \boldsymbol{\psi} \right) := \mathbb{P}\left( y_{ij} = 1 \middle\vert \textbf{z}_i, \textbf{z}_j, \boldsymbol{\psi}\right) = 1 - \mathbb{P}\left( y_{ij} = 0 \middle\vert \textbf{z}_i, \textbf{z}_j, \boldsymbol{\psi} \right).
\end{equation}
Here $\bpsi$ is a vector of global parameters with dimensions indexed by the labels $\mathcal{K} = \left\{ 1,\dots,K\right\}$. The parameter $\bpsi$ is sometimes referred to as the static parameter of the model, as opposed to the latent field $\mathcal{Z}:=\{\bz_1,\ldots,\bz_N\}$.
A number of possible formulations for the edge probabilities have been proposed.
Within the statistical community, the most common choice is the logit link proposed by \textcite{hoff2002latent}:
\begin{equation}
\label{eq:logit_link}
 \log\left( \frac{p\left( \textbf{z}_i, \textbf{z}_j; \psi \right)}{1-p\left( \textbf{z}_i, \textbf{z}_j; \psi \right)} \right) := \psi - d\left( \textbf{z}_i, \textbf{z}_j \right);
\end{equation}
where $d\left( \textbf{z}_i, \textbf{z}_j \right)$ denotes the Euclidean distance between the two nodes, and $\psi\in\mathbb{R}$ is simply an intercept parameter ($K=1$).
Alternative formulations are used in \textcite{gollini2014joint} and \textcite{rastelli2015properties}.
In physics, a variety of edge probability functions have been proposed.
A list of these can be found, for example, in \textcite{parsonage2015fast} and references therein.
One feature that all of these formulations have in common is that the edge probability is a function of the distance between the two nodes, and that its value decreases as the latent distance increases, making long edges less likely to appear.

Since the data observations are conditionally independent given the latent positions, the likelihood of all undirected LPMs may be written as:
\begin{equation}\label{eq:likelihood_1}
 \mathcal{L}_{\mathcal{Y}}\left( \mathcal{Z},\boldsymbol{\psi} \right) = \mathbb{P}\left( \mathcal{Y} \middle\vert \mathcal{Z}, \boldsymbol{\psi}\right) =
 \prod_{\left\{i\in\mathcal{V}\right\}} \prod_{\left\{j\in\mathcal{V}\setminus{i}\right\}}
 \left\{\left[p\left( \textbf{z}_i, \textbf{z}_j; \boldsymbol{\psi} \right)\right]^{y_{ij}}\left[1-p\left( \textbf{z}_i, \textbf{z}_j; \boldsymbol{\psi} \right)\right]^{1-y_{ij}}\right\}^{1/2}
\end{equation}
where the square root is introduced to remedy the fact that each edge contributes twice to the likelihood of the undirected network (the motivation behind this particular formulation will be more clear in the following sections).
We note that, for a given set of positions $\mathcal{Z}$ and global parameters $\boldsymbol{\psi}$, the computational cost for the likelihood evaluation is $\mathcal{O}\left( N^2 \right)$, i.e. it grows with the square of the number of nodes.

\subsection{Bayesian inference}\label{sec:LPM_Bayesian_Inference}
Inference for LPMs is usually carried out in a Bayesian framework, using MCMC to obtain posterior samples of the model parameters \parencite{hoff2002latent, handcock2007model, krivitsky2009representing, raftery2012fast}.
The posterior distribution of interest is:
\begin{equation}\label{eq:posterior1}
 \pi\left( \mathcal{Z}, \bpsi \middle\vert \mathcal{Y}\right) \propto \mathcal{L}_{\mathcal{Y}}\left( \mathcal{Z}, \bpsi \right) \pi\left( \mathcal{Z} \right) \pi\left( \bpsi \right).
\end{equation}
Assuming that the cost of the evaluation of the priors $\pi\left( \mathcal{Z} \right)$ and $\pi\left( \bpsi \right)$ is $\mathcal{O}\left( N \right)$ or negligible,
the computational cost required to evaluate the posterior value grows with $N^2$, which corresponds to the bottleneck imposed by the likelihood term.
A Markov chain Monte Carlo sampler can be designed to sample each of the model parameters in turn, using the following full-conditional distributions:
\begin{equation}\label{eq:fullconditional_z}
 \pi\left( \textbf{z}_i \middle \vert \mathcal{Z}_{-i},\bpsi,\mathcal{Y} \right) \propto \pi\left( \textbf{z}_i \right)
 \prod_{\left\{j\in\mathcal{V}:\ j\neq i\right\}} \left[p\left( \textbf{z}_i, \textbf{z}_j;\bpsi\right)\right]^{y_{ij}}\left[ 1-p\left( \textbf{z}_i, \textbf{z}_j;\bpsi\right) \right]^{1-y_{ij}}
\end{equation}
\begin{equation}\label{eq:fullconditional_psi}
 \pi\left( \psi_k \middle \vert \bpsi_{-k},\mathcal{Z},\mathcal{Y} \right) \propto \pi\left( \psi_k \right) \mathcal{L}_{\mathcal{Y}}\left( \mathcal{Z}, \bpsi \right)
\end{equation}
In the previous equations: $i\in\mathcal{V}$, $k\in \mathcal{K}$, whereas $\mathcal{Z}_{-i} = \left\{\textbf{z}_j\right\}_{j\in\mathcal{V}\setminus\{i\}}$ and $\bpsi_{-k} = \left\{\psi_{k'}\right\}_{k'\in\mathcal{K}\setminus\{k\}}$.
Here we have assumed that the model parameters are all independent a priori: this is indeed very common and it will be formalised in the following sections.
Each evaluation of \eqref{eq:fullconditional_psi} clearly requires $\mathcal{O}\left( N^2 \right)$.
Since each evaluation of \eqref{eq:fullconditional_z} requires $\mathcal{O}\left( N \right)$ calculations, the overall complexity of the sampler still grows with the square of $N$.

The full-conditionals \eqref{eq:fullconditional_z} and \eqref{eq:fullconditional_psi} are generally not in standard form.
Hence, new values for the model parameters are sampled through what is usually referred to as a Metropolis-within-Gibbs (MwG) type algorithm (see \eg \cite{gilks1995adaptive}). More precisely, potential new parameters are drawn from proposal distributions $q_{\mathcal{Z}}\left( \textbf{z}_i\rightarrow \textbf{z}_i'\right)$ and $q_{\bpsi}\left( \psi_k\rightarrow \psi_k'\right)$ and are then accepted with probability:
\begin{equation}\label{eq:acceptance_z}
 \alpha_{\mathcal{Z}}\left( \textbf{z}_i\rightarrow \textbf{z}_i'\right) := 1\wedge\left\{ \frac{q_{\mathcal{Z}}\left( \textbf{z}_i'\rightarrow \textbf{z}_i\right) \pi\left( \textbf{z}_i' \middle \vert \mathcal{Z}_{-i},\bpsi,\mathcal{Y} \right)}
 {q_{\mathcal{Z}}\left( \textbf{z}_i\rightarrow \textbf{z}_i'\right)\pi\left( \textbf{z}_i \middle \vert \mathcal{Z}_{-i},\bpsi,\mathcal{Y} \right)}\right\}
\end{equation}
\begin{equation}\label{eq:acceptance_psi}
 \alpha_{\bpsi}\left( \psi_k\rightarrow \psi_k'\right) := 1\wedge\left\{ \frac{q_{\bpsi}\left( \psi_k'\rightarrow \psi_k\right) \pi\left( \psi_k' \middle \vert \bpsi_{-k},\mathcal{Z},\mathcal{Y} \right)}
 {q_{\bpsi}\left( \psi_k\rightarrow \psi_k'\right)\pi\left( \psi_k \middle \vert \bpsi_{-k},\mathcal{Z},\mathcal{Y} \right)}\right\}
\end{equation}
for the latent positions and global parameters, respectively.
In the previous equations, for two real numbers $a$ and $b$, $a\wedge b$ stands for the minimum between the two numbers.
Also, we point out that, as is common practice, the two dimensions of the latent positions are dealt with simultaneously, i.e. they are updated in block.

The MwG sampler described above defines a Markov chain whose stationary distribution is the posterior of interest \eqref{eq:posterior1}. As a consequence, provided that the Markov chain is ergodic, the samples obtained at stationarity can be used to fully characterise the posterior distribution of interest. In fact, the MwG chain is shown to be geometrically ergodic for a variety of proposal distributions and under some regulatory conditions on the invariant distribution $\pi$, see \cite[Theorem 5]{roberts1998two}.

\subsection{Non-identifiability of the latent positions}\label{sec:LPM_unidentifiable}
LPMs are known to be non-identifiable with respect to translations, rotations, and reflections of the latent positions.
This issue has no particular effect on the sampling itself, yet it may hinder the interpretation of the posterior samples.
For this reason, the latent positions are usually post-processed using the so-called Procrustes' matching.
This procedure consists of rotating
and translating
the configurations of points observed at the end of each iteration, to match a given reference layout.
In this way, the trajectory of each node during the sampling may be properly assessed, since the overall rotation
and translation
effect has been removed.
A detailed description of the method is given, for example, in \textcite{hoff2002latent} and \textcite{shortreed2006positional}.
In this paper, we adopt exactly this same strategy to solve the non-identifiability problem, using as reference either the true positions (if available) or the maximum a posteriori configuration.

\section{Assumptions}\label{sec:Assumptions}
The methodology we develop in this paper relies on several assumptions which are described in this section.

\begin{assumption}\label{assumption_bounded_support}
All of the model parameters are defined on bounded sets, i.e.:
\begin{equation}
 \forall\, k\in\mathcal{K}: \psi_{k} \in [\psi_k^{\mathcal{L}},\psi_k^{\mathcal{U}}] =: \mathcal{S}_{k},
\end{equation}
\begin{equation}
 \forall\, i\in\mathcal{V}: \textbf{z}_i \in [-S,S]\times[-S,S] = :\mathcal{S}_{\mathcal{Z}},
\end{equation}
for some finite positive constants $S$, $\psi_k^{\mathcal{L}}$ and $\psi_k^{\mathcal{U}}$.
\end{assumption}

\begin{remark}
We note that, as a consequence of Assumption \ref{assumption_bounded_support}, the parameter space:
\begin{equation}
 \mathcal{S} = \mathcal{S}_{\mathcal{Z}}^{N} \times \mathcal{S}_{\{k=1\}} \times \cdots \times \mathcal{S}_{\{k=K\}}
\end{equation} 
is a compact set.
\end{remark}
\begin{remark}
 Assumption \ref{assumption_bounded_support} is rather strong and  contrasts with the usual LPM frameworks.
 However, we argue that, from a practical point of view, these imposed conditions do not change the essence of the model.
 In fact, very large LPM parameters normally lead to degenerate models, and hence to realised networks that are meaningless in this modelling context (e.g. full or empty graphs).
 In this perspective, there is in fact a necessity to constrain $\mathbf{\psi}$ to a bounded space in order to make the model more tractable.
\end{remark}


\begin{remark}
In the applications sections of this paper, a spherical truncated Gaussian distribution is used as prior on the latent positions:
\begin{equation}\label{eq:prior_z_1}
 \pi\left( \textbf{z}_i \right) = \prod_{m=1}^2 \left\{\frac{\phi\left( \frac{z_{im}}{\gamma} \right)}{\gamma\left[ \Phi\left( \frac{S}{\gamma} \right) - \Phi\left( \frac{-S}{\gamma} \right)\right]}\right\},\hspace{2cm} \forall i\in\mathcal{V};
\end{equation}
where $z_{im}$ indicates the $m$-th coordinate of node $i$'s latent position, $\gamma > 0$, $\phi$ and $\Phi$ are the p.d.f. and c.d.f. of a standard Gaussian distribution, respectively.
This prior specification is essential in order to match the grid construction that is described in Section \ref{sec:GridStructure}, however other priors could be considered.
\end{remark}

\begin{assumption}\label{assumption_edge_probability}
 The edge probability function $p:\mathbb{R}^2\times \mathbb{R}^2 \times \mathbb{R}^K \rightarrow [p^{\mathcal{L}},p^{\mathcal{U}}] \subset \left( 0,1 \right)$ satisfies the following properties:
 \begin{enumerate}
  \item[\textbf{a)}] $p$ \textbf{depends on the positions only through the latent distances}. This means that there exists a function $\rho:\mathbb{R}^+\times \mathbb{R}^K \rightarrow [p^{\mathcal{L}},p^{\mathcal{U}}]$ such that
  $$
  \forall \textbf{z}_i,\textbf{z}_j\in\mathbb{R}^2,\ \forall \bpsi\in\mathbb{R}^K:\ p\left(\textbf{z}_i,\textbf{z}_j;\bpsi\right) = \rho\left( d\left(\textbf{z}_i,\textbf{z}_j\right), \bpsi \right).
  $$
  \item[\textbf{b)}] $\rho$ \textbf{is non-increasing w.r.t. distances}; i.e. for any $\textbf{z}_i\in\mathcal{S}_{\mathcal{Z}}$, $i=1,2,3,4$:
  $$
  \mbox{ if } d\left(\textbf{z}_1,\textbf{z}_2\right) \geq d\left(\textbf{z}_3,\textbf{z}_4\right), \mbox{ then } p\left(\textbf{z}_1,\textbf{z}_2;\bpsi\right) \leq p\left(\textbf{z}_3,\textbf{z}_4;\bpsi\right).
  $$
  \item[\textbf{c)}] $\rho$ \textbf{is Lipschitz w.r.t. distances}; i.e. for any $\textbf{z}_i\in\mathcal{S}_{\mathcal{Z}}$, $i=1,2,3,4$:
  $$
  \left|p\left(\textbf{z}_1,\textbf{z}_2;\bpsi\right) - p\left(\textbf{z}_3,\textbf{z}_4;\bpsi\right)\right| \leq \kappa \left|d\left(\textbf{z}_1,\textbf{z}_2\right) - d\left(\textbf{z}_3,\textbf{z}_4\right)\right| \ ;
  $$
  for some finite positive constant $\kappa$.
 \end{enumerate}
\end{assumption}

\begin{remark}
Assumption \ref{assumption_edge_probability} is satisfied by most link functions, including the logit link of Eq. \eqref{eq:logit_link}.
\end{remark}

%

\section{Grid approximation of the latent distances}\label{sec:GridStructure}
Hereafter, we consider a generic LPM satisfying Assumptions \ref{assumption_bounded_support} and \ref{assumption_edge_probability},
and we illustrate an estimation procedure based on a grid partitioning of the latent space.
Following an approach similar to that of \textcite{parsonage2015fast}, we create a partitioning of the latent positions $\mathcal{Z}$ using a grid in $\mathbb{R}^2$.
The grid is made of adjacent squares (called boxes hereafter) of side length $b > 0$, each having both sides aligned to the axes.
A generic box $B[g,h]$ has corners located in $\left( bg-b,bh-b \right),\left( bg-b,bh \right),\left( bg,bh \right)$ and $\left( bg,bh-b \right)$,
where the indexes $g$ and $h$ are positive or negative but non-null integers, i.e. $g,h\in\mathbb{Z}\setminus{0}$.
Figure \ref{fig:grid1} shows the latent space with the partitioning given by these boxes.

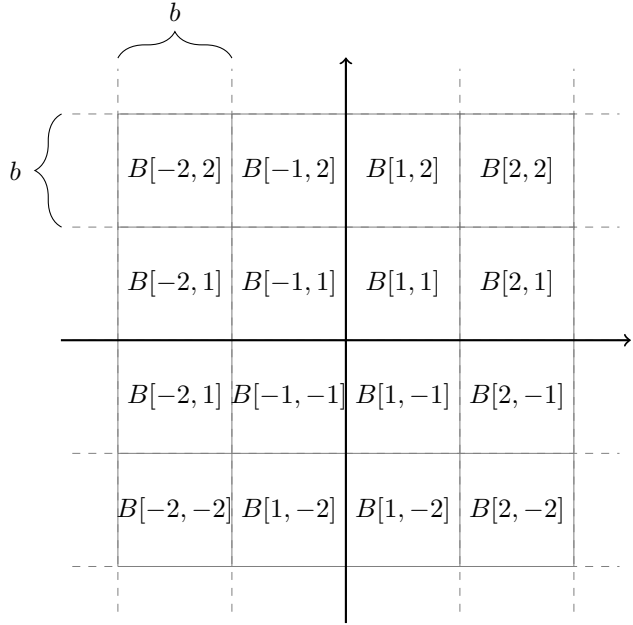
\begin{figure}[htbp]
\begin{center}
\begin{tikzpicture}[scale=1.5]
  \draw[step=1cm,gray,dashed,very thin] (-2.4,-2.4) grid (2.4,2.4);
  \draw[step=1cm,gray,very thin] (-2.0,-2.0) grid (2.0,2.0);
    \draw [<->,thick] (0,2.5) node (yaxis) [above] {}
        |- (2.5,0) node (xaxis) [right] {};
    \draw [-,thick] (0,-2.5) node (yaxis) [above] {}
        |- (-2.5,0) node (xaxis) [right] {};
  \node[obs, draw=none, fill=none, minimum size=0.5cm]  (z1) at (0.5,0.5) {$B[1,1]$};
  \node[obs, draw=none, fill=none, minimum size=0.5cm]  (z1) at (1.5,0.5) {$B[2,1]$};
  \node[obs, draw=none, fill=none, minimum size=0.5cm]  (z1) at (1.5,1.5) {$B[2,2]$};
  \node[obs, draw=none, fill=none, minimum size=0.5cm]  (z1) at (0.5,1.5) {$B[1,2]$};
  \node[obs, draw=none, fill=none, minimum size=0.5cm]  (z1) at (-0.5,-0.5) {$B[-1,-1]$};
  \node[obs, draw=none, fill=none, minimum size=0.5cm]  (z1) at (-1.5,-0.5) {$B[-2,1]$};
  \node[obs, draw=none, fill=none, minimum size=0.5cm]  (z1) at (-0.5,-1.5) {$B[1,-2]$};
  \node[obs, draw=none, fill=none, minimum size=0.5cm]  (z1) at (-1.5,-1.5) {$B[-2,-2]$};
  \node[obs, draw=none, fill=none, minimum size=0.5cm]  (z1) at (-0.5,0.5) {$B[-1,1]$};
  \node[obs, draw=none, fill=none, minimum size=0.5cm]  (z1) at (-1.5,0.5) {$B[-2,1]$};
  \node[obs, draw=none, fill=none, minimum size=0.5cm]  (z1) at (-0.5,1.5) {$B[-1,2]$};
  \node[obs, draw=none, fill=none, minimum size=0.5cm]  (z1) at (-1.5,1.5) {$B[-2,2]$};
  \node[obs, draw=none, fill=none, minimum size=0.5cm]  (z1) at (0.5,-0.5) {$B[1,-1]$};
  \node[obs, draw=none, fill=none, minimum size=0.5cm]  (z1) at (1.5,-0.5) {$B[2,-1]$};
  \node[obs, draw=none, fill=none, minimum size=0.5cm]  (z1) at (0.5,-1.5) {$B[1,-2]$};
  \node[obs, draw=none, fill=none, minimum size=0.5cm]  (z1) at (1.5,-1.5) {$B[2,-2]$};

  \draw [decorate,decoration={brace,amplitude=10pt,raise=4pt},yshift=0pt] (-2.4,1) -- (-2.4,2) node [black,midway,xshift=-0.75cm] {\footnotesize $b$};
  \draw [decorate,decoration={brace,amplitude=10pt,raise=4pt},yshift=0pt] (-2,2.4) -- (-1,2.4) node [black,midway,yshift=0.75cm] {\footnotesize $b$};

\end{tikzpicture}
\end{center}
\caption{Grid partitioning the latent space.}
\label{fig:grid1}
\end{figure}

We denote with $N[g,h]$ the number of points located in a generic box:
\begin{equation}
 N[g,h] = \left|\left\{ i\in\mathcal{V}:\textbf{z}_i\in B[g,h]\right\}\right|,
\end{equation}
where $\left|H\right|$ denotes the cardinality of the set $H$.

It is also useful to introduce the centre of a generic box $\textbf{c}[g,h] := \left( bg-b/2,bh-b/2 \right)$.
Given a node $j\in\mathcal{V}$ such that $\textbf{z}_j \in B[g,h]$, we also indicate the centre of $B[g,h]$ with $\textbf{c}_j$, representing the centre of the box containing $j$.
An essential aspect of our proposed approach is determined by the fact that the distance $d\left( \textbf{z}_i, \textbf{z}_j\right)$ between any two nodes may be approximated by $d\left( \textbf{z}_i, \textbf{c}_j\right)$, i.e. the distance between node $i$ and the centre of the box containing $j$.

Finally, we denote with $\xi_i[g,h]$ the number of edges between node $i$ and the nodes allocated to $B[g,h]$, i.e.:
\begin{equation}
 \xi_i[g,h] = \sum_{\left\{ j\in\mathcal{V}:\ \textbf{z}_j\in B[g,h]\right\}} y_{ij};
\end{equation}
and by $\zeta_i[g,h]$ the number of missing edges:
\begin{equation}
 \zeta_{i}[g,h] = N[g,h] - \xi_i[g,h] - \mathds{1}\left(\left\{z_i\in B[g,h]\right\}\right);
\end{equation}
where $\mathds{1}\left( \mathcal{A} \right)$ is $1$ if $\mathcal{A}$ is true or $0$ otherwise.
Also, the degree of node $i\in\mathcal{V}$, i.e. the number of edges incident to it, is indicated by $D_i$.

These quantities introduced are exploited in the following sections to illustrate a new way of carrying out Bayesian inference for LPMs, requiring a dramatically reduced computational cost.  

\section{Noisy MCMC}\label{sec:noisy_mcmc}
As explained in the previous section, the distance from node $i$ to the centre of a generic box $\textbf{c}[g,h]$ can be used as a proxy for the true distances between $i$ and all of the points contained in $B[g,h]$, for all $g$ and $h$.
This in turn allows one to approximate the edge probability $p\left( \textbf{z}_i, \textbf{z}_j;\bpsi\right)$ using $p\left( \textbf{z}_i, \textbf{c}_j;\bpsi\right)$, for all $j\in \mathcal{V}$ such that $\textbf{z}_j\in B[g,h]$.
This fact may be exploited in a number of ways. For example, the likelihood defined in \eqref{eq:likelihood_1} may be replaced by the following noisy likelihood:
\begin{equation}\label{eq:noisy_likelihood_1}
 \tilde{\mathcal{L}}_{\mathcal{Y}}\left( \mathcal{Z},\boldsymbol{\psi} \right) := \left\{\prod_{i=1}^{N}\prod_{g,h} \left[p\left( \textbf{z}_i, \textbf{c}[g,h];\boldsymbol{\psi}\right)\right]^{\xi_i[g,h]}\left[ 1-p\left( \textbf{z}_i, \textbf{c}[g,h];\boldsymbol{\psi}\right) \right]^{\zeta_i[g,h]}\right\}^{1/2};
\end{equation}
where each edge contribution is essentially replaced by its noisy counterpart.
Here, by counting each edge contribution twice and then correcting with the square root, one has the possibility to use the noisy approximation in a symmetric way, with respect to  any pair of nodes $i$ and $j$.
We point out that a number of alternative estimators are available for the likelihood value using the grid approximation:
the estimator proposed in \eqref{eq:noisy_likelihood_1} is one that generally works well in practice and that makes our theoretical developments easier to follow.

With \texttt{NoisyLPM}, we refer to a MwG sampler that relies on the approximate edge probabilities rather than the true ones, or, equivalently,
that uses the noisy likelihood $\tilde{\mathcal{L}}_{\mathcal{Y}}$ instead of the true likelihood ${\mathcal{L}}_{\mathcal{Y}}$.
In \texttt{NoisyLPM} the full-conditionals introduced in \eqref{eq:fullconditional_z} and \eqref{eq:fullconditional_psi} can be approximated as follows:
\begin{equation}\label{eq:fullconditional_z_noisy}
 \tilde{\pi} \left( \textbf{z}_i \middle \vert \mathcal{Z}_{-i},\bpsi,\mathcal{Y} \right) \propto \pi\left( \textbf{z}_i \right)
 \prod_{g,h} \left[p\left( \textbf{z}_i, \textbf{c}[g,h];\bpsi\right)\right]^{\xi_i[g,h]}\left[ 1-p\left( \textbf{z}_i, \textbf{c}[g,h];\bpsi\right) \right]^{\zeta_i[g,h]}\ ;
\end{equation}
\begin{equation}\label{eq:fullconditional_psi_noisy}
 \tilde{\pi} \left( \psi_k \middle \vert \bpsi_{-k},\mathcal{Z},\mathcal{Y} \right) \propto \pi\left( \psi_k \right) \tilde{\mathcal{L}}_{\mathcal{Y}}\left( \mathcal{Z}, \bpsi \right)\ .
\end{equation}

It is apparent that the computational cost of one evaluation of the approximate full-conditionals is much smaller than that of the true counterpart.
In fact, for a given grid, the complexity of a noisy MwG update becomes $\mathcal{O}\left( 1 \right)$ and $\mathcal{O}\left( N \right)$ for latent positions and global parameters, respectively.
Overall, this makes the computational complexity of the \texttt{NoisyLPM} procedure of an order smaller than $\mathcal{O}\left( N^2 \right)$.

\section{Theoretical guarantees}\label{sec:theoretical_guarantees}
This section provides theoretical results that characterise the error induced by our approximation.
Indeed, replacing $\mathcal{L}_{\mathcal{Y}}(\mathcal{Z},\boldsymbol{\psi})$ with $\tilde{\mathcal{L}}_{\mathcal{Y}}(\mathcal{Z},\boldsymbol{\psi})$
in the MwG acceptance ratio implies that the stationary distribution of the Markov chain may not coincide anymore with the posterior distribution of interest described in Section \ref{sec:LPM_Bayesian_Inference}.
Here, our main goal is to show that a noisy MwG sampler, such as the \texttt{NoisyLPM}, generates a sequence of random variables whose distribution can be made arbitrarily close to the true posterior $\pi\left(\ \cdot\ \middle\vert \mathcal{Y} \right)$.

In fact, one can note that, by construction, our noisy MwG sampler admits the approximate posterior $\tilde{\pi}\left(\ \cdot\ \middle\vert \mathcal{Y} \right)$ as stationary distribution.
Hence, the approximation error is directly, and globally, measured by $\|\pi-\tilde{\pi}\|$, i.e. the total variation distance between the two posteriors.
However, obtaining an explicit expression or an upper bound of $\|\pi-\tilde\pi\|$ is challenging.
Our main result (Theorem \ref{thm:main}) gives an upper bound of $\|\pi-\tilde\pi\|$ obtained by bounding the distance between the exact and noisy Markov chains, following \textcite{mitrophanov2005sensitivity}. The core of our work has been to devise a bound, which we believe is tight, on the distance between the two Markov kernels, see Theorem \ref{theorem_acceptance_probs_1} and Corollary \ref{corollary_2}.


The theoretical framework is the analysis of the perturbation of uniformly ergodic Markov chains, initiated in \textcite{mitrophanov2005sensitivity} and refined for the noisy Metropolis-Hastings case in \textcite{alquier2016noisy}. We first recall the uniform ergodicity assumption.
\begin{assumption}
\label{assumption_uniform}
A $\pi$-invariant Markov kernel $P$ operating on a state space $\mathcal{S}$ is uniformly ergodic if after $t\in\mathbb{N}$ iterations, the distance between the chain distribution and the stationary distribution is bounded as follows:
 \begin{equation}
  \sup_{u\in\mathcal{S}}\|P^{t}(\btheta,\,\cdot\,) - \pi\| \leq C\tau^t,
 \end{equation}
for some $C<\infty$ and $\tau<1$.
\end{assumption}

The section is divided in two parts: in the spirit of \textcite{alquier2016noisy}, we first derive an extension of their theoretical framework to include the analysis of noisy Metropolis-within-Gibbs algorithms in a generic setup,
that is, beyond the LPM context. In the second part we give a series of theoretical results that are specific to LPMs, and that aim to characterise the magnitude of the approximation error in the MwG acceptance probabilities, in preparation for applying our general result. In particular, we show that the distance between the exact algorithm and the \texttt{NoisyLPM} can be arbitrarily reduced by refining the latent grid.


\subsection{Noisy MwG aggregated errors}
\label{sec:theoretical_guarantees_sub_2}
This paper deals with an approximation of a MwG Markov chain, where the parameters of the model are updated in turn. Perturbations of uniformly ergodic Metropolis-Hastings Markov chains have been studied in \textcite{alquier2016noisy}.
We show, here, that a similar analysis can be carried out in a generic MwG sampler framework.


We introduce the following notation.
We indicate with $r$ a generic parameter update step of the MwG, with $R$ indicating the number of updates performed in a particular algorithmic instance.
For example, $R$ may indicate the number of model parameters that are updated in each iteration of the MCMC algorithm.

An arbitrary sigma-algebra on the compact parameter space $\mathcal{S}$ is denoted by $\mathcal{A}$.
For any signed measure $\mu$ on $(\mathcal{S},\mathcal{A})$, we denote the total variation distance of $\mu$ by $\|\mu\|:=\sup_{A\in\mathcal{A}}|\mu(A)|$.
For any Markov kernel $P$ taking values in $\mathcal{S}\times\mathcal{A}$, we denote the operator norm of $P$ as:
\begin{equation}\label{eq:norm}
 \|P\|: = \sup_{\btheta\in\mathcal{S}}\|P(\btheta,\cdot)\| = \sup_{\btheta\in\mathcal{S}}\sup_{A\in\mathcal{S}}\left|P(\btheta,A)\right|\,.
\end{equation}
Finally, let $\mu P$ be the measure on $(\mathcal{S},\mathcal{A})$ defined as $\mu P:=\int_{\mathcal{S}}\mu(\mathrm{d} x)P(x,\,\cdot\,)$.
The kernel $P$ will generally be considered as the \textit{exact} kernel, whereas $\tilde{P}$ will denote its noisy counterpart.

The following proposition shows that the distance between the one step transition of an elementary MwG update and its noisy counterpart is uniformly bounded.
\begin{proposition}\label{prop_noisy_corollary}
Let $\alpha\left( \boldsymbol{\theta} \rightarrow \boldsymbol{\theta}' \right)$ and $\tilde{\alpha}\left( \boldsymbol{\theta} \rightarrow \boldsymbol{\theta}' \right)$ be the corresponding exact and noisy acceptance probabilities (respectively), that arise when considering a generic update $\boldsymbol{\theta} \rightarrow \boldsymbol{\theta}'$.
If there exists some finite constant $\omega>0$ such that:
\begin{equation}
 \left| \alpha\left( \btheta \rightarrow \btheta' \right) - \tilde{\alpha}\left( \btheta \rightarrow \btheta' \right)\right| \leq \omega
\end{equation}
then we also have:
\begin{equation}
 \|P - \tilde{P}\| \leq \omega \ .
\end{equation}
\end{proposition}
The proof of this proposition is given in Appendix \ref{app:prop_noisy_corollary}.
Now, we characterize instead the error that is accumulated over a sweep of the MwG sampler over a collection of the model parameters.
We denote with $P_{[R]}$ (resp. $\tP_{[R]}$) the kernel corresponding to a sequential update of a number of model parameters using exact (resp. approximate) acceptance probability:
\begin{equation}\label{eq:composite_kernel}
\begin{split}
 P_{[R]}\left( \btheta,\cdot \right) &:= P_{1}\cdots P_{R}\left( \btheta,\cdot \right)\,, \\
 &= \int\cdots\int P_1\left( \btheta, d\btheta_1 \right)\cdots P_{R-1}\left( \btheta_{R-2},d\btheta_{R-1} \right)P_R\left( \btheta_{R-1}, \cdot \right)\,,\\
 \tP_{[R]}\left( \btheta,\cdot \right) &:= \tP_{1}\cdots \tP_{R}\left( \btheta,\cdot \right)\,.
\end{split}
\end{equation}
This corresponds to the composition of the $R$ elementary kernels, indicated with $P_r$ or $\tilde{P}_r$, each characterising the update of one model parameter.
\begin{proposition}\label{prop_subadditive}
The error carried by a product of noisy Markov kernels $\tilde{P}_{[R]} = \tilde{P}_1\tilde{P}_2\cdots\tilde{P}_R$ relative to its exact version $P_{[R]} = P_1P_2\cdots P_{R}$ is subadditive, in the sense that:
\begin{equation}\label{eq:thm_scan_1}
 \|P_{[R]} - \tilde{P}_{[R]}\| \leq \sum_{r=1}^{R} \|P_r - \tilde{P}_r\|\ .
\end{equation}
\end{proposition}
The proof of this proposition is provided in Appendix \ref{app:prop_subadditive}. We can join the above two results in the following proposition.

\begin{proposition}\label{prop_scan}
 Let $\alpha$ and $\tilde{\alpha}$ be the corresponding exact and noisy acceptance probabilities (respectively), that arise when considering a generic update for any of the model parameters. Assume that there exists some finite constant $\omega>0$, such that:
\begin{equation}
 \left| \alpha - \tilde{\alpha}\right| \leq \omega \ .
\end{equation}
Then, after $R < \infty$ parameter updates, the product kernels satisfy:
\begin{equation}
 \|P_{[R]} - \tilde{P}_{[R]}\| \leq R \omega \ .
\end{equation}
\end{proposition}
\begin{proof}
 This immediately follows from Propositions \ref{prop_noisy_corollary} and \ref{prop_subadditive}.
\end{proof}

Finally, as in \textcite{alquier2016noisy}, we rely on Corollary $3.1$ of \textcite{mitrophanov2005sensitivity} to give our main result for the \texttt{NoisyLPM} algorithm.
\begin{corollary}\label{corollary_mitrophanov}
 Let $P_{[R]}$ (resp. $\tilde{P}_{[R]}$) be the transition kernel for the exact MwG sampler (resp. noisy) described in Eq. \eqref{eq:composite_kernel}.  Assume that the Markov chain with kernel $P$ is uniformly ergodic (Assumption \ref{assumption_uniform}). Then, for any $t>0$ and for any starting point $\btheta\in\mathcal{S}$:
\begin{equation}
 \|\delta_{\btheta} P_{[R]}^t - \delta_{\btheta} \tilde{P}_{[R]}^t\| \leq \left( \lambda + \frac{C\tau^\lambda}{1-\tau} \right)R\omega \ ,
\end{equation}
where $\lambda = \lceil{\log\left( 1/C \right)}\slash{\log(\tau)}\rceil$.
\end{corollary}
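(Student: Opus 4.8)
The plan is to obtain the corollary as a direct combination of Theorem~\ref{theorem_scan} with the perturbation bound for uniformly ergodic chains given by Corollary~3.1 of \textcite{mitrophanov2005sensitivity}, mirroring the strategy used by \textcite{alquier2016noisy} in the Metropolis--Hastings setting. First I would record the structural facts: the full-sweep kernels $P_{[R]}$ and $\tilde{P}_{[R]}$ defined in \eqref{eq:composite_kernel} are genuine Markov kernels on $(\mathcal{S},\mathcal{A})$, being finite compositions of the elementary Markov kernels $P_r$ and $\tilde{P}_r$; moreover $\pi$ is invariant for $P_{[R]}$ by construction of the MwG sweep (and $\tilde{\pi}$ is invariant for $\tilde{P}_{[R]}$). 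This is what allows Mitrophanov's corollary to be applied to the pair $(P_{[R]},\tilde{P}_{[R]})$ verbatim.

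Second, I would invoke Theorem~\ref{theorem_scan} to control the one-step discrepancy of the composite kernels: setting $\mathfrak{K}:=\sup_{r\leq R}\|P_r-\tilde{P}_r\|$, we obtain $\|P_{[R]}-\tilde{P}_{[R]}\|\leq R\mathfrak{K}$. The point to check here is merely that the operator norm $\|\cdot\|$ from \eqref{eq:norm} is exactly the quantity appearing in Mitrophanov's hypothesis, namely the supremum over starting states of the total variation distance between the one-step laws of the two kernels; this is immediate from the definition. Then I would apply the uniform ergodicity hypothesis (Assumption~\ref{assumption_uniform}) to $P:=P_{[R]}$, which gives $\sup_{u\in\mathcal{S}}\|P_{[R]}^{t}(u,\cdot)-\pi\|\leq C\tau^{t}$ for all $t$. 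Feeding the perturbation bound $\epsilon=R\mathfrak{K}$ together with the constants $C,\tau$ into Corollary~3.1 of \textcite{mitrophanov2005sensitivity} yields, for every $t>0$ and every starting point $u\in\mathcal{S}$,
\[
 \|\delta_u P_{[R]}^{t}-\delta_u\tilde{P}_{[R]}^{t}\|\leq\left(\lambda+\frac{C\tau^{\lambda}}{1-\tau}\right)R\mathfrak{K},\qquad \lambda=\lceil\log(1/C)/\log(\tau)\rceil,
\]
which is precisely the claimed inequality.

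There is no genuine analytical obstacle: the argument is essentially bookkeeping, with all the substantive work already carried out in Theorem~\ref{theorem_scan} (and, upstream, in Proposition~\ref{theorem_noisy_corollary}), while the uniform ergodicity of $P_{[R]}$ is taken as an assumption rather than established. The only mildly delicate points are (i) making sure the norm controlled in Theorem~\ref{theorem_scan} coincides with the perturbation quantity that Mitrophanov's corollary controls, and (ii) confirming that $\tilde{P}_{[R]}$ is a bona fide Markov kernel so that the cited corollary applies without modification. If one wished to make the hypothesis of Assumption~\ref{assumption_uniform} concrete for the MwG sweep, one would additionally need to argue uniform (rather than merely geometric) ergodicity, which is plausible here because the bounded supports imposed by Assumption~\ref{assumption_bounded_support} render the whole state space $\mathcal{S}$ a small set; but this lies outside the scope of the corollary as stated.
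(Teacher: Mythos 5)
Your proof is correct and follows exactly the route the paper intends: the paper gives no separate proof of this corollary, stating only that it follows by combining the one-step bound $\|P_{[R]}-\tilde{P}_{[R]}\|\leq R\mathfrak{K}$ from Theorem~\ref{theorem_scan} with Corollary~3.1 of \textcite{mitrophanov2005sensitivity} under Assumption~\ref{assumption_uniform}, which is precisely your argument. Your additional checks (that the operator norm of Eq.~\eqref{eq:norm} is the quantity in Mitrophanov's hypothesis and that $\tilde{P}_{[R]}$ is a genuine Markov kernel) are sensible bookkeeping consistent with the paper's treatment.
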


\subsection{LPM likelihood errors}\label{sec:theoretical_guarantees_sub_1}
We now report theoretical results that are specific to LPMs, in preparation of applying Corollary \ref{corollary_mitrophanov} to this context.
In the following theorem, we show that the error on the MwG acceptance probabilities is bounded, and that it can be arbitrarily reduced by refining the latent grid partition.
\begin{theorem}\label{theorem_acceptance_probs_1}
 Under Assumptions \ref{assumption_bounded_support} and \ref{assumption_edge_probability}, the error on the acceptance probabilities for a latent position update satisfies for all $i\in\mathcal{V}$:
\begin{equation}\label{eq:theorem_noisy_alpha_z}
 \left| \alpha_{\mathcal{Z}}\left( \textbf{z}_i\rightarrow \textbf{z}_i'\right) - \tilde{\alpha}_{\mathcal{Z}}\left( \textbf{z}_i\rightarrow \textbf{z}_i'\right)\right|
 \leq \kappa' b N \ ,
 \end{equation}
 and for any $k\in\mathcal{K}$ identifying a static parameter's update:
\begin{equation}\label{eq:theorem_noisy_alpha_psi}
 \left| \alpha_{\bpsi}\left( \psi_k\rightarrow \psi_k'\right) - \tilde{\alpha}_{\bpsi}\left( \psi_k\rightarrow \psi_k'\right)\right|
 \leq \kappa'' b N^2 \ ,
\end{equation}
where $\kappa'$ and $\kappa''$ are suitable positive finite constants which do not depend on either $b$ or $N$.
\end{theorem}
The proof of Theorem \ref{theorem_acceptance_probs_1} is given in Appendix \ref{app:theorem_acceptance_probs_1}.

\begin{corollary}\label{corollary_2}
 Let $P_{[N+K]}$ and $\tilde{P}_{[N+K]}$ be the exact and noisy composite kernels for the full deterministic scan via MwG samplers under Assumptions \ref{assumption_bounded_support} and \ref{assumption_edge_probability}.
 These satisfy:
\begin{equation}\label{eq:thm_scan_2}
 \|P_{[N+K]} - \tilde{P}_{[N+K]}\| \leq \kappa b N^2 \ ;
\end{equation}
for a suitable positive finite constant $\kappa$ which does not depend on either $b$ or $N$.
\end{corollary}
\begin{proof}
 This is proved using Proposition \ref{prop_scan} and Theorem \ref{theorem_acceptance_probs_1}.
 There are $N$ latent position updates, whereby each of the kernels is upper bounded by $\kappa' b N$, for a suitable positive constant $\kappa'$.
 This gives a composite kernel with an upper bound of $\kappa' b N^2$.
 In addition, there are $K<\infty$ global parameters, whose number does not depend on $N$ or $b$, each yielding an upper bound of $\kappa'' b N^2$, for a suitable positive constant $\kappa''$.
 So, the upper bound for all combined updates is still $\kappa b N^2$, where $\kappa = \max\left\{ \kappa', K\kappa''\right\} < \infty$.
\end{proof}

Now we can state our main result for \texttt{NoisyLPM}, which connects Corollary \ref{corollary_2} with the theory on noisy MCMC.
\begin{theorem}
\label{thm:main}
Let $P$ be the exact MwG composite kernel which operates on $\mathcal{S}=\Scal_{\psi}\times \Scal_{\mathcal{Z}}$ and $\tP$ be the corresponding kernel of \texttt{NoisyLPM}. If the LPM satisfies Assumptions \ref{assumption_bounded_support} and \ref{assumption_edge_probability}, and if $P$ is uniformly ergodic (Assumption \ref{assumption_uniform}), then for any starting point $\btheta\in\Scal_{\psi}\times \Scal_{\mathcal{Z}}$ and any $t>0$:
\begin{equation}
\label{eq:main_thm_lpm}
 \|\delta_{\btheta} P^t - \delta_{\btheta} \tilde{P}^t\|
  \leq \left( \lambda + \frac{C\tau^\lambda}{1-\tau}\right) \kappa b N^2
\end{equation}
where $\lambda = \lceil{\log\left( 1/C \right)}\slash{\log(\tau)}\rceil$ depends on the exact sampler convergence properties, and $\kappa$ is a positive constant that does not depend on either $b$ or $N$.
\end{theorem}
\begin{proof}
Assumptions \ref{assumption_bounded_support} and \ref{assumption_edge_probability} guarantee that Corollary \ref{corollary_2} holds. Then, uniform ergodicity confirms that an LPM version of Corollary \ref{corollary_mitrophanov} exists, hence Eq. \eqref{eq:main_thm_lpm} holds true.
\end{proof}

\begin{remark}
 The significance of Theorem \ref{thm:main} is two-fold.
 On the one hand, for a fixed $N$, the error upper bound can clearly be made arbitrarily close to zero by reducing the grid parameter $b$, denoting the sidelength of the boxes.
 This confirms that using a finer grid reduces the approximation error.
 On the other hand, since the constant $\kappa$ does not depend on $N$, this result emphasises that, asymptotically, the error grows at most with the squared number of nodes. This suggests that, as a worst case scenario, the error can be kept constant (as $N$ increases) by keeping $b$ on the order of $1/N^2$, or, equivalently, a number of boxes\footnote{Note that, for a fixed $b$, we create $M^2$ boxes by partitioning each axis into $M$ segments of length $b$. The number of segments $M$ satisfies: $M = 2S/b \approx 2SN^2$, and so $M^2 \sim \mathcal{O}(N^4)$.} in the order of $N^4$.
 However, we show in the simulations (in particular in Section \ref{sec:gibbs_samplers_small}) that this bound is very conservative.
\end{remark}

\subsection{Note on the uniform convergence assumption}
Assumption \ref{assumption_uniform} is usually strong in the context of MCMC algorithms.
However, since the state space is compact (see Assumption \ref{assumption_bounded_support}),
it is easy to show that the convergence of the Gibbs kernel $P_{[R]}$ to $\pi$ is uniform.
Even though this result is not surprising, we could not identify a specific entry in the literature providing a rigorous proof of this fact.
For completeness, we include Theorem \ref{thm_uniform} in Appendix \ref{app:thm_uniform}.

\section{Experiments}\label{sec:experiments}
In this section we propose three simulation studies to characterise the bias introduced by our approximation, and to gauge the gain in computing time achieved.
We consider an LPM characterised by two global parameters $\boldsymbol{\psi} = \left( \beta,\theta \right)$ which determine the edge probabilities as follows:
\begin{equation}\label{eq:edge_prob_2}
 \log\left( \frac{p\left( \textbf{z}_i, \textbf{z}_j; \beta, \theta \right)}{1-p\left( \textbf{z}_i, \textbf{z}_j; \beta, \theta \right)} \right) := \beta - e^{\theta}d\left( \textbf{z}_i, \textbf{z}_j \right).
\end{equation}
Here, $\beta\in \mathbb{R}$, $\theta\in \mathbb{R}$, and $d$ denotes the Euclidean distance between the two latent positions.

A priori, the latent positions are IID variables distributed according to a truncated Gaussian, as shown in \eqref{eq:prior_z_1}.
The proposal distributions used in the sampling procedures are also truncated Gaussians, defined as random walks over the parameter space, but we note that other proposals may be considered.
We fix both the threshold parameter $S$ and the standard deviation $\gamma$ to $1$.
This choice does not hinder the flexibility of the model; in fact, the likelihood parameter $\theta$ directly regulates the magnitude of the effect of the latent space.
In other words, $e^{\theta}$ may simply be considered as the standard deviation for the latent positions.
The likelihood parameters $\beta$ and $\theta$ are assumed to be independent a priori, and both distributed according to non-informative Gaussian priors with fixed large standard deviations.

We note that the model specification considered does not completely satisfy Assumption \ref{assumption_bounded_support}, since, for example, the supports of $\beta$ and $\theta$ are not bounded.
However, we argue that large values of these parameters correspond to degenerate LPMs, which are of little interest in practical situations, and highly unlikely to occur.
In other words, the extreme values of the LPM parameters do not play a role and do not affect the MCMC estimation unless the observed graph is degenerate or near-degenerate.

\subsection{Study 1: likelihood approximation}\label{sec:study_1}
In the first study, we focus only on the approximation of the log-likelihood, i.e. we analyse the error introduced when \eqref{eq:likelihood_1} is replaced with the noisy counterpart in \eqref{eq:noisy_likelihood_1}.

First, we generate random LPMs with global parameters set to $\beta = 0.5$ and $\theta = \log\left( 3 \right)$, and with latent positions drawn uniformly in the rectangle $\mathcal{S}_{\mathcal{Z}}$.
This combination of parameters yields realised networks where about $10\%$ of the possible edges appear.
We simulate $100$ networks for each value of $N$ varying in the set $\left\{ 100, 250, 500, 1000, 2500, 5000, 10000 \right\}$.
For each of these realised networks, we evaluate the exact log-likelihood function derived from \eqref{eq:likelihood_1} for the true parameter values.

On each axis, the interval $[-1,1]$ is segmented in $M=8$ adjacent intervals of the same length, hence obtaining a grid of $64$ squared boxes of side length $1/4$.
The noisy log-likelihood derived from \eqref{eq:noisy_likelihood_1} is thus evaluated using such grid.
In fact, the same procedure is repeated on the same networks using various grids determined by $M$ in the set $\left\{ 8, 16, 32, 64 \right\}$.
Note that the highest values of $N$ and $M$ are rather extreme: $N=10000$ gives networks so large that even storing or working with the adjacency matrix is computer intensive (in fact our implementation does not require calculating the adjacency matrix at any stage); $M = 64$ gives a grid which contains $4096$ boxes, so, for several $N$ values, we would have more boxes than data points. This scenario is proposed only to provide a more complete assessment, since having these many boxes (compared to nodes) defeats the whole purpose of applying our procedure.

A preliminary plot is provided in Figure \ref{fig:sim_like_3}.
\begin{figure}[!htb]
\centering
\includegraphics[width=0.495\textwidth]{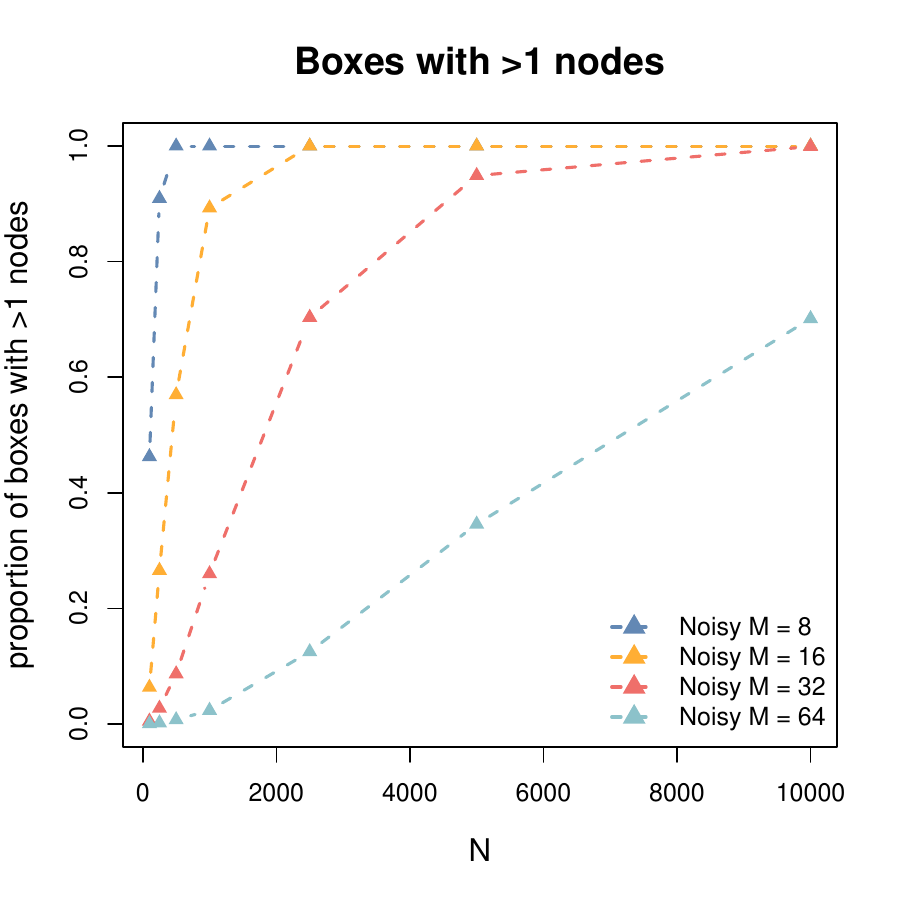}
\caption{\textbf{Simulation study 1}. Proportion of grid boxes that contain more than $1$ node.}
 \label{fig:sim_like_3}
\end{figure}
This plot shows the proportion of boxes that contain $2$ or more nodes.
This is derived without using the log-likelihood, but simply by constructing the grid over the randomly generated data.
The plot emphasises that, at least for the starting configuration of the noisy algorithms, most boxes would contain more than $1$ node.
The proportion seems to converge to $1$ rather quickly with $N$ increasing, with the only exception being given by the extreme value $M=64$.

A proportion close to $1$ is ideal for \texttt{NoisyLPM}, since only these boxes guarantee a gain in computational efficiency.
By contrast, boxes containing $0$ or $1$ nodes would require the algorithm to perform inefficient steps which are less convenient than the corresponding ones under a standard likelihood calculation.
The figure thus confirms that, in theory, good computational improvements should be obtained for very small $M$ values.

Figure \ref{fig:sim_like_2} shows the log-likelihood error and the average log-likelihood computing times for all of the combinations of $N$ and $M$.
\begin{figure}[!htb]
\centering
\includegraphics[width=0.495\textwidth,page=1]{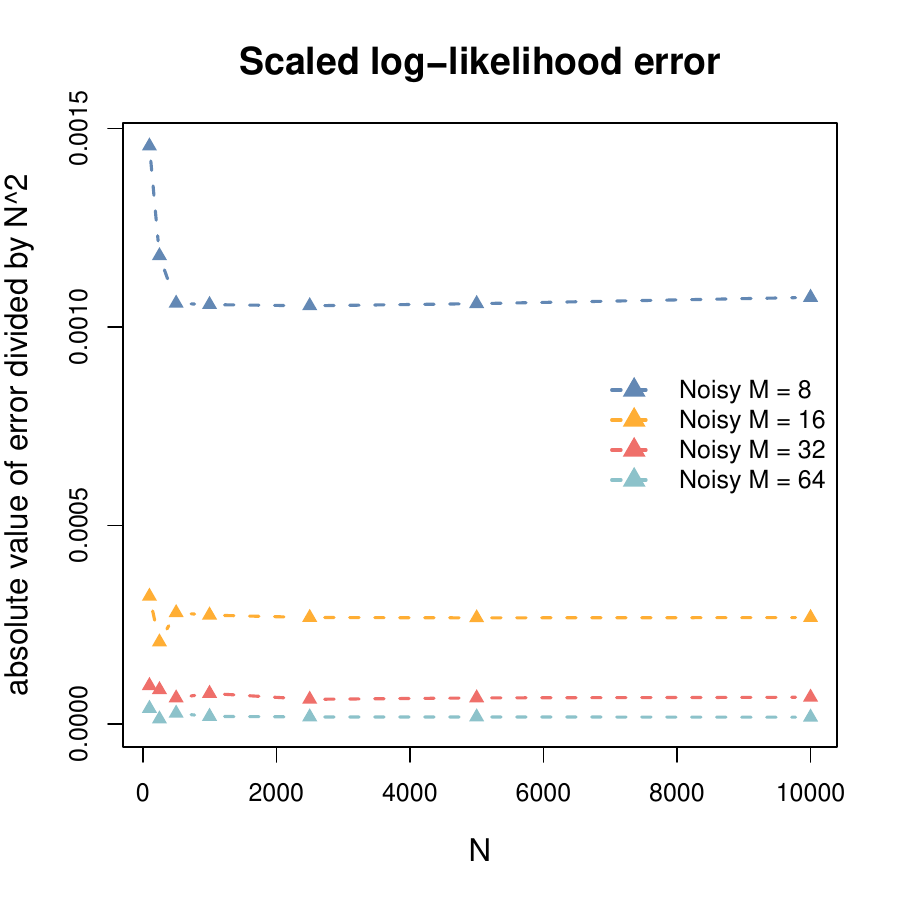}
\includegraphics[width=0.495\textwidth,page=1]{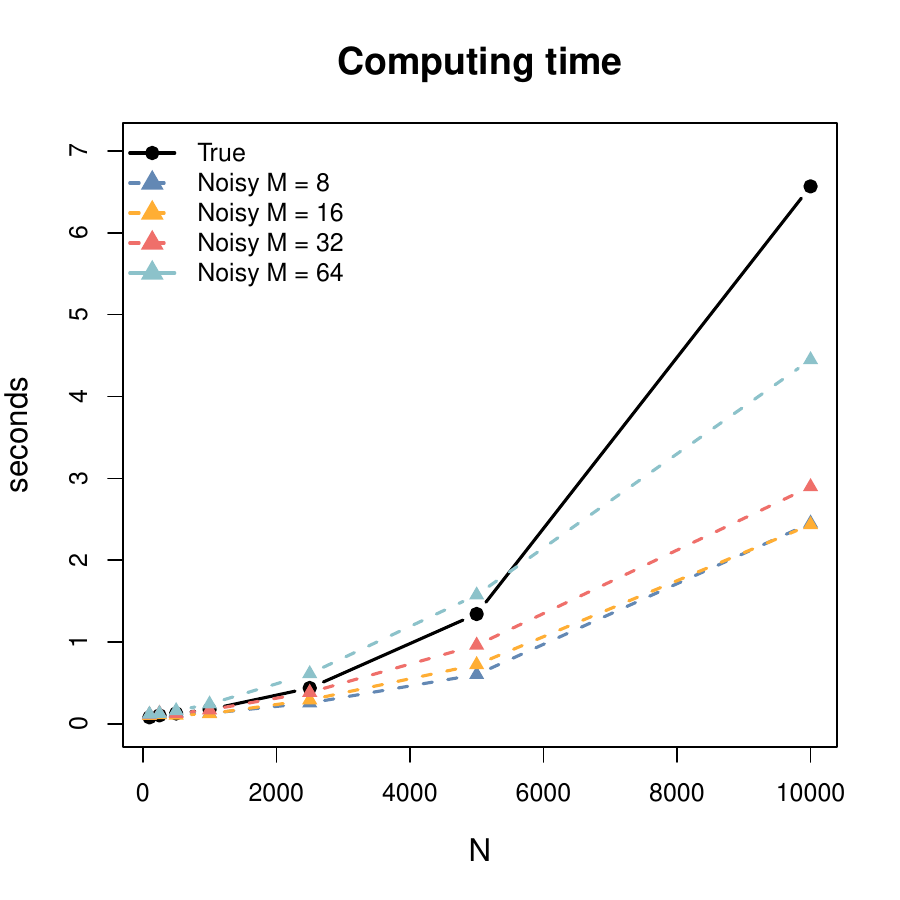}
\caption{\textbf{Simulation study 1}. The absolute value of the log-likelihood error, divided by $N^2$, is shown on the left panel. The right panel shows instead the average (across $100$ networks) computing time for the same log-likelihood evaluations.}
 \label{fig:sim_like_2}
\end{figure}
The left panel of this figure shows the absolute value of the error, divided by $N^2$.
The rescaling confirms very clearly that the error is asymptotically quadratic in the number of nodes.
The most accurate algorithm is obtained with $M=64$, which gives a very minimal average error on each likelihood term.

The right panel shows instead the computing time for one log-likelihood calculation, averaged out across the $100$ repetitions.
In this plot we can see that the computational complexity is highest for the standard algorithm (labelled as \texttt{True}): we know that the complexity of this particular algorithm is quadratic in the number of nodes.
The noisy algorithms all exhibits a computational complexity of a lower order, and an overall lower computing time.
These algorithms require the construction and maintainment of the grid structure, which involves some roughly constant computing time.
As a consequence, for $M=64$, the noisy algorithm gives a convenient trade-off only when $N$ becomes very large.
Again, this is reasonable since this particular value of $M$ is extreme and would not be considered in applications, unless $N$ is also especially large.
Figure \ref{fig:sim_like_2} confirms two basic but fundamental asymptotical facts: on the one hand, the log-likelihood errors can be arbitrarily reduced by choosing a finer grid, for a fixed $N$. On the other, for any grid fineness (fixed $M$), there exists an $N$ large enough such that the noisy algorithm is faster than the standard one.

\subsection{Study 2: Metropolis-within-Gibbs asymptotics}\label{sec:gibbs_samplers_small}
In the second simulation study, we aim at characterising the estimation errors using the complete MCMC sampling procedure.
This means that we run the MwG sampler on a number of networks, for both the non-noisy (which we take as ground truth) and noisy procedures.
Then, we compare noisy and exact posterior distributions using their means. In particular, we quantify the discrepancy by calculating the Mean Squared Error (MSE) from $100$ replications of both chains, for each parameter setting.

Since running the complete inferential procedure requires a much higher computational cost, we only provide these results for relatively smaller networks and coarser grids.
However, in our third simulation study, we provide further evidence that the results hold also for larger networks.

The setup of this study is as follows.
We consider networks of $N$ nodes where $N$ varies in the set $\left\{ 20, 40, 60, 80, 100\right\}$.
For each value $N$ we generate $100$ networks using $\beta = 2.5$ and $\theta = \log\left( 3.5 \right)$, and run the exact MwG sampler for $20{,}000$ iterations.
The first $10{,}000$ iterations are discarded as burn-in, and only one draw every $10$-th is stored to be kept in the final sample.
We consider grids defined by $M$ in the set $\left\{ 4, 8, 12\right\}$, and run the three corresponding samplers using the same number of iterations.
Although we do not check convergence diagnostics for each individual generated network and posterior sample, we argue that convergence is satisfactory across all experiments.
Besides, this aspect is not critical since we can study the errors even if convergence is not reached; or, in other words, we are studying the errors after $20{,}000$ MCMC iterations.

Figure \ref{fig:sim_mcmc_small} shows the results that we obtained.
\begin{figure}[!htb]
\centering
\includegraphics[width=0.65\textwidth,page=1]{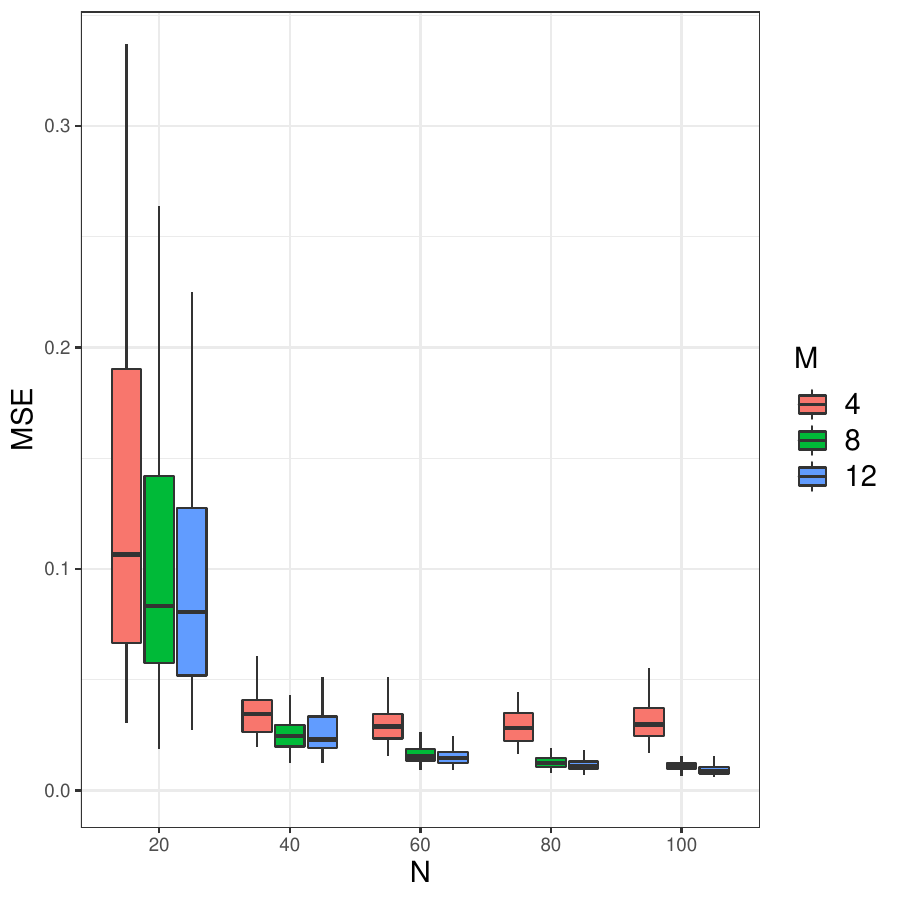}
\caption{\textbf{Simulation study 2}. The Mean Squared Error, where the mean is calculated across the $N$ nodes. The boxplot represents the variability across the $100$ repetitions, for each combination of $N$ and $M$.}
 \label{fig:sim_mcmc_small}
\end{figure}
Even though Theorem \ref{thm:main} suggests that $M$ should scale with $N^2$, the evidence from the simulations highlight a much more positive outcome, where, in fact, the errors do not increase with $N$ for $M = 8$ and $M = 12$. The approximation with $M=4$ ends up being too rough, as the posterior mean MSEs slightly increases between $N=80$ and $N=100$.
The bound of Theorem \ref{thm:main} offers a scaling which is, as expected, too conservative. Recall that Theorem \ref{thm:main} was essentially obtained by bounding the distance between both Markov kernels. Thus, since more data points propagate a larger discretisation error, the approximation offered by the noisy Markov kernel and the exact one appears to get rougher too.

One fundamental aspect that is not taken into account by our analysis is that more data points also bring more information available for inference. Our reasonably regular models imply that both noisy and exact posterior distributions see their probability mass concentrate towards the Maximum Likelihood Estimator (MLE) as $N$ increases, in an asymptotic regime typically described by a Bernstein-von Mises concentration result. The fact that $M$ needs to increase with $N$ is thus necessary since, both MLEs being different, the total variation distance between both posterior distributions will eventually (as $N$ gets larger) increase. What this study shows is that before entering this asymptotic regime, both posteriors actually get first closer. In fact, the discretisation approximation is clearly overcompensated by that concentration phenomenon, since keeping $M$ constant one still observes an improvement as $N$ increases. We suspect that for $M=4$, the Bernstein-von Mises regime kicks in after $N=80$. We also suspect that for a large enough $N$, a similar observation could be done for $M=8$ and $M=12$. 

For the noisy posterior to keep track with the exact one even in the asymptotic regime, it could be possible that a (much) less aggressive scaling than $M=\mathcal{O}(N^2)$ is sufficient. One way to answer that question could be to search for tighter bounds between both posterior distributions using their Bernstein-von Mises approximation, hence by totally bypassing the analysis of both Markov kernels.

A second aspect that transpires from this simulation study is a confirmation that finer grids will give more accurate results, with major improvements happening with relatively coarse grids. We expand more on this aspect in the next simulation study, where we show that we can choose very small values of $M$ (relative to $N$), and still obtain very accurate inference results, with only a fraction of the computing time.

\subsection{Study 3: Metropolis-within-Gibbs error}\label{sec:gibbs_samplers}
As data, we use three artificial networks, which are generated so that each of them has network density close to $10\%$.
In this simulation study we do not use any repetitions, so we analyse exactly three networks.
Another difference with the previous setup is that, in this study, node $1$ is assumed to be located exactly at the origin of the space, for comparison purposes.
The number of nodes $N$ of the networks is set to $200$, $400$ and $600$, respectively.
For the \texttt{NoisyLPM}, we consider three different grid structures: the number of intervals $M$ in each axis varies in the set $\left\{ 8,12,16\right\}$ (the results are shown for $M = 8$ and $M = 16$, but all results are available from the authors upon request).

The non-noisy MwG sampler and the \texttt{NoisyLPM} are run on each dataset for a total of $200{,}000$ iterations.
The first $100{,}000$ iterations are discarded as burn-in, and only one draw every $10$-th is stored to be kept in the final sample.
Eventually, all of the algorithms are bound to return a collection of $10{,}000$ draws for each model parameter.
In this simulation study, we checked and confirmed MCMC convergence individually for each of the posterior samples.

Figure \ref{fig:sim_mcmc_node} shows the posterior densities for the node located in the centre of the space.
\begin{figure}[!htb]
\centering
\includegraphics[width=0.49\textwidth]{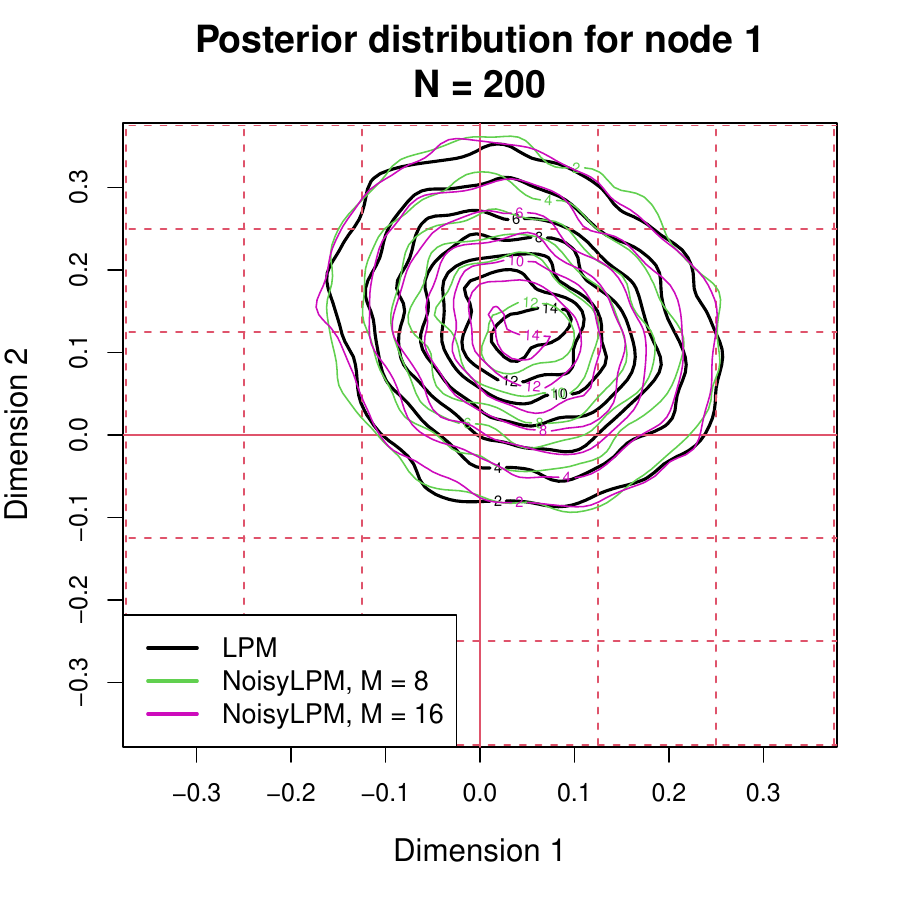}
\includegraphics[width=0.49\textwidth]{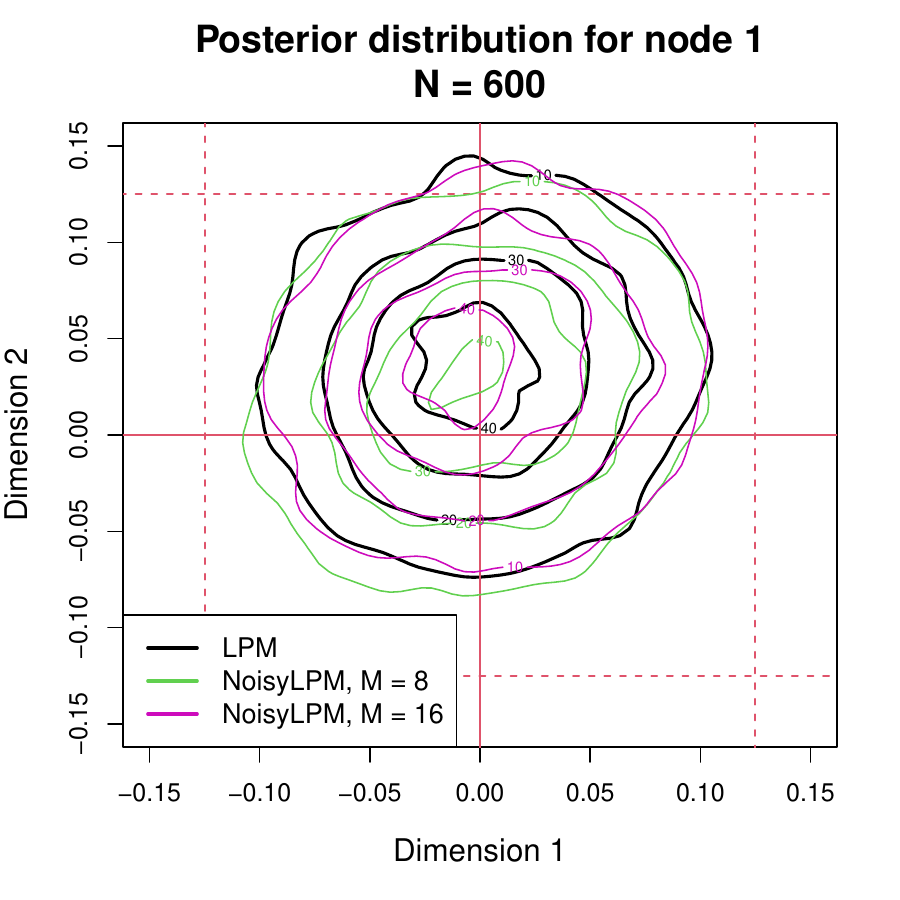}
\caption{\textbf{Simulation study 3}. Posterior densities for the node in the centre. }
 \label{fig:sim_mcmc_node}
\end{figure}
The two \texttt{NoisyLPM} posterior densities shown are extremely similar to the ground truth, proving that the uncertainty in the positioning is not necessarily amplified by the approximation.

Figure \ref{fig:sim_mcmc_positions} focuses instead the (posterior) average position as a point estimator, and compares the estimated positions of all nodes in the ground truth and noisy case.
\begin{figure}[!htb]
\centering
\includegraphics[width=0.49\textwidth,page=1]{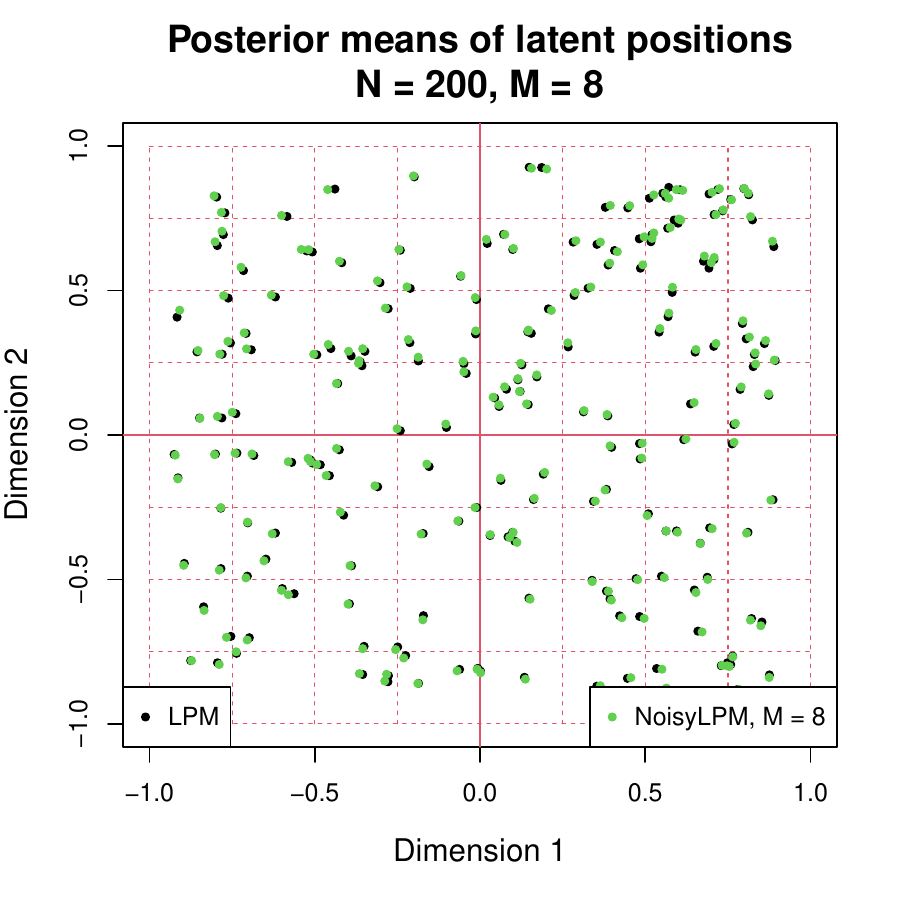}
\includegraphics[width=0.49\textwidth,page=3]{sim_N_200_positions_true_to_average.pdf}\\
\includegraphics[width=0.49\textwidth,page=1]{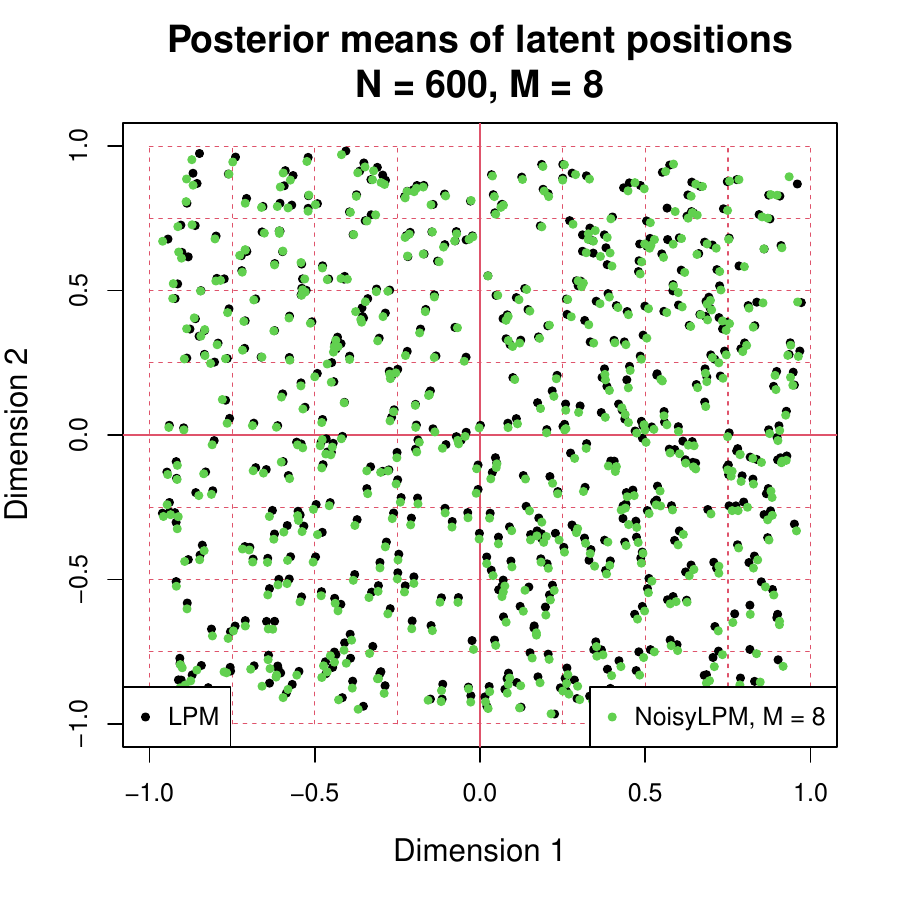}
\includegraphics[width=0.49\textwidth,page=3]{sim_N_600_positions_true_to_average.pdf}\\
\caption{\textbf{Simulation study 3}. Comparison between ground truth and noisy estimates of the positions.
The black circles correspond to the posterior means of the positions in the ground truth configuration, whereas the green and pink nodes correspond to the noisy counterparts.}
 \label{fig:sim_mcmc_positions}
\end{figure}
Again, the approximation appears to have very limited consequences on the correctness of the results.
In particular, the estimation error is almost non-existent when $M=16$.

Figures \ref{fig:sim_mcmc_beta} and \ref{fig:sim_mcmc_theta} illustrate the posterior densities for the global parameters $\beta$ and $\theta$.
\begin{figure}[!htb]
\centering
\includegraphics[width=0.49\textwidth]{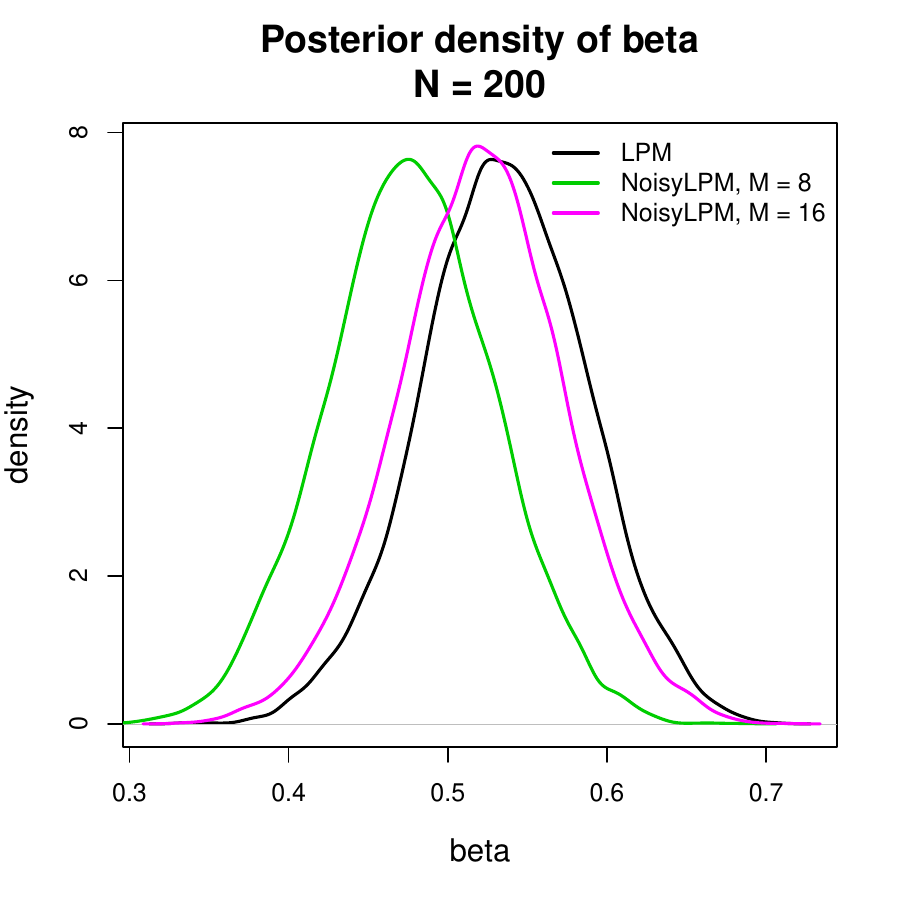}
\includegraphics[width=0.49\textwidth]{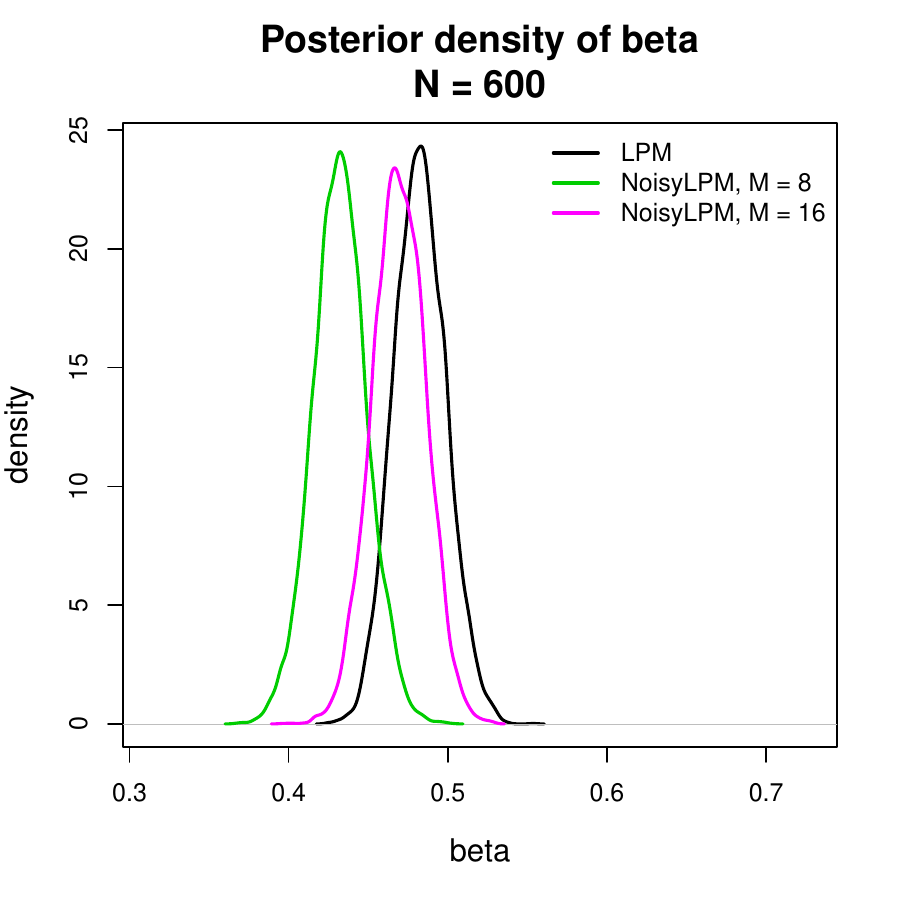}
\caption{\textbf{Simulation study 3}. Posterior densities for $\beta$. Note the different scaling in the horizontal axis.}
 \label{fig:sim_mcmc_beta}
\end{figure}
\begin{figure}[!htb]
\centering
\includegraphics[width=0.49\textwidth]{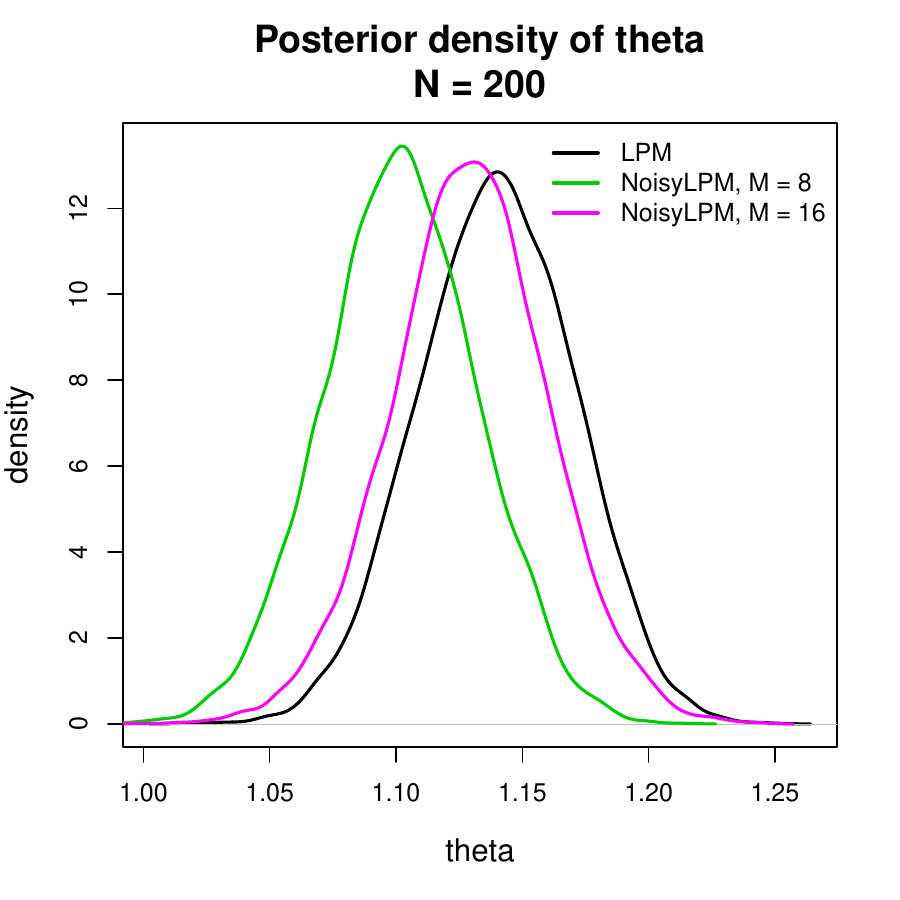}
\includegraphics[width=0.49\textwidth]{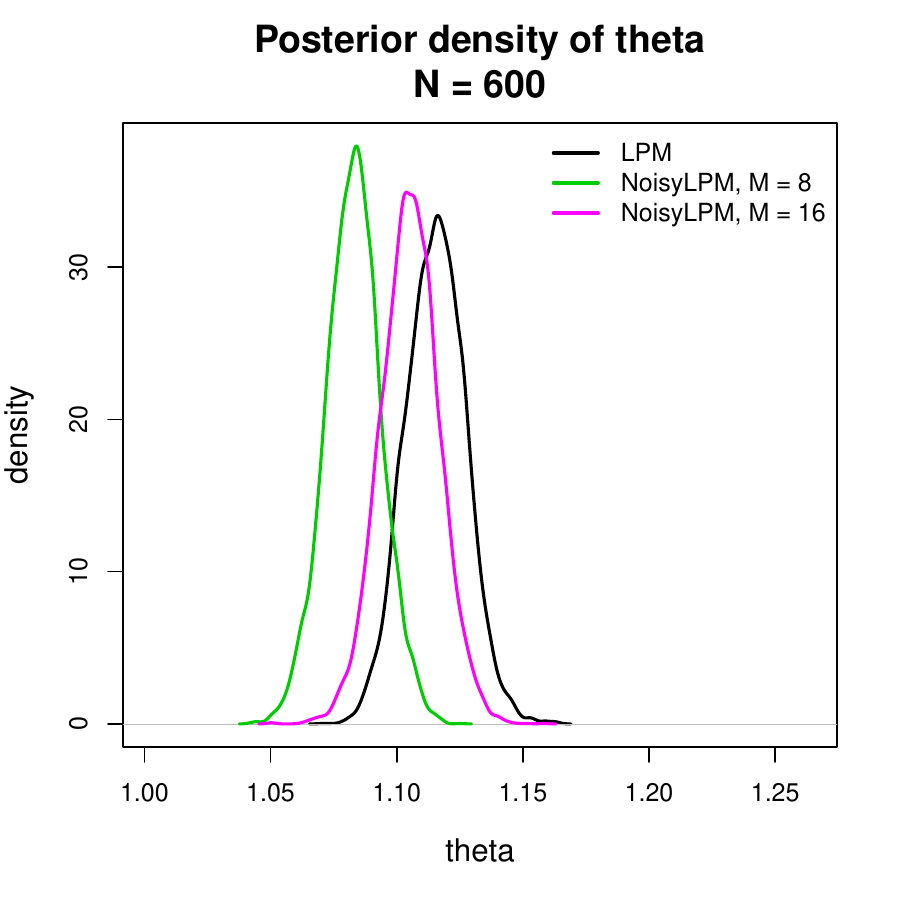}
\caption{\textbf{Simulation study 3}. Posterior densities for $\theta$. Note the different scaling in the horizontal axis.}
 \label{fig:sim_mcmc_theta}
\end{figure}
Note that, in both figures, the horizontal axes of the plots are on different scales.
In fact, these plots confirm that the uncertainty on global parameters tends to vanish as $N$ increases, for both non-noisy and noisy algorithms.
As expected, a larger $M$ gives results closer to the ground truth.

We further analyse the results by comparing the estimated edge probabilities in Figure \ref{fig:sim_mcmc_edges}.
\begin{figure}[!htb]
\centering
\includegraphics[width=0.425\textwidth]{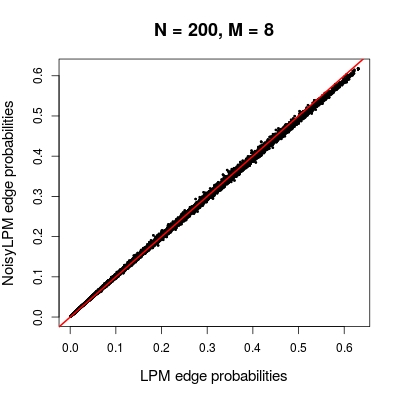}
\includegraphics[width=0.425\textwidth]{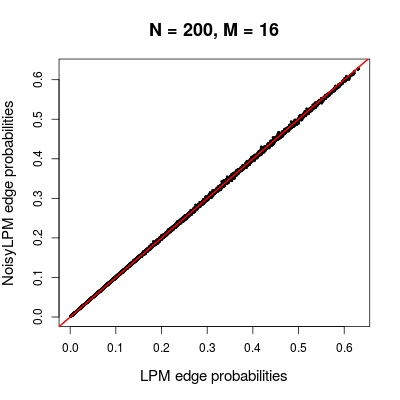}\\
\includegraphics[width=0.425\textwidth]{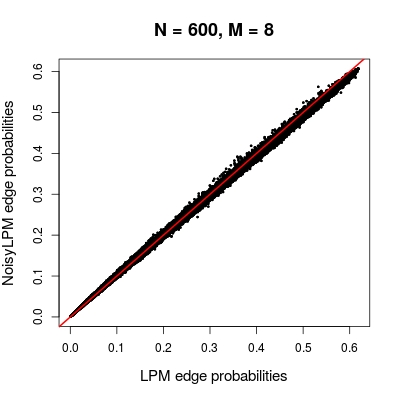}
\includegraphics[width=0.425\textwidth]{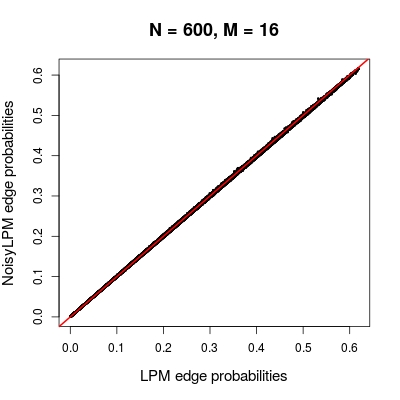}\\
\caption{\textbf{Simulation study 3}. Comparison between ground truth and noisy estimates of the edge probabilities.
These estimates are obtained by pluggin-in the posterior mean estimates of the model parameters in \eqref{eq:edge_prob_2}.}
 \label{fig:sim_mcmc_edges}
\end{figure}
These plots also confirm the correctness of the noisy procedure, and the limited effects of the approximation on the results.

Finally, in Table \ref{tab:sim_mcmc_table_1} we show the computing time required for each sampler.
The highest gain is achieved for $M=8$ and $N=600$, where the \texttt{NoisyLPM} is roughly three times faster than the benchmark.
As we will show in the next section, the gain can become substantial when larger networks are considered.
\begin{table}[htb]
\centering
\begin{tabular}{ccccc}
  \specialrule{.1em}{0em}{0em}
          $N$ & Ground truth & \multicolumn{3}{c}{\texttt{NoisyLPM}}   \\
           &  & $M = 8$ & $M = 12$ & $M = 16$  \\
  \specialrule{.1em}{0em}{0em}
          200 & 2{,}310 & 1{,}669 & 2{,}252 & 2{,}767  \\
          400 & 7{,}242 & 3{,}515 & 5{,}458 & 7{,}673  \\
          600 & 14{,}347 & 4{,}718 & 7{,}501 & 11{,}825  \\
   \specialrule{.1em}{0em}{0em}
\end{tabular}
\caption{\textbf{Simulation study 3}. Seconds (rounded value) required to obtain $200{,}000$ iterations from each of the networks, for both algorithms.}
\label{tab:sim_mcmc_table_1}
\normalsize
\end{table}

\section{Coauthorship in astrophysics}\label{sec:coauthorship}
\subsection{Binary case}
The coauthorship network studied in this section was first analysed by \textcite{leskovec2007graph}.
The nodes correspond to authors, whereas the presence of an edge between two nodes means that the two researchers appear as coauthors on a paper submitted to arXiv, in the astrophysics category.
The network is by construction undirected and without self-edges.
The number of nodes is $18{,}872$, whereas the number of edges is $198{,}110$, corresponding to an average degree of about $21$.

We fit the LPM of Section \ref{sec:experiments} to this data using the \texttt{NoisyLPM} with $M = 16$.
First, we let the algorithm run for a large number of iterations.
We use this phase as burn-in, and to tune the proposal variances individually for each parameter until the corresponding acceptance probability lies between $20\%$ and $50\%$.
Then, we run the \texttt{NoisyLPM} for $50{,}000$ iterations, storing only one draw every $10$-th.
Trace plots and other standard convergence diagnostics suggest good mixing and good convergence of the chain to its stationary distribution.
In summary, for each latent position and global parameter, we obtain $5{,}000$ random draws that can be used to characterise the distribution of interest.

Figure \ref{fig:astro_z} shows the average latent positions for all of the nodes in the network.
\begin{figure}[!htb]
\centering
\includegraphics[width=\textwidth]{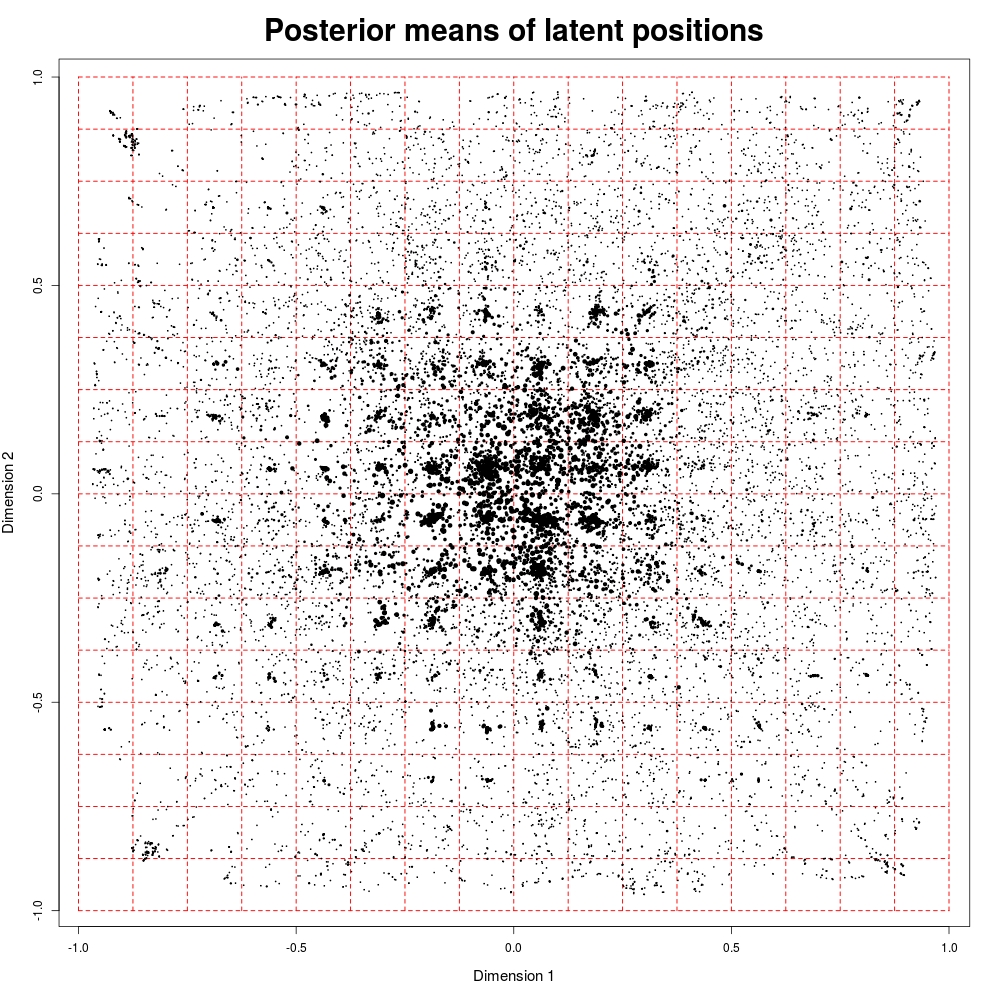}
\caption{\textbf{Astrophysics}. Average latent positions of the nodes with circle size proportional to the node's degree. The grid in dashed red line corresponds to the partitioning imposed.}
 \label{fig:astro_z}
\end{figure}
We point out that the nodes have a tendency to be distributed close to the centre of each box.
Quite reasonably, this is a natural consequence of our construction, since the centre of the boxes is used as a proxy to calculate the latent distances.
For example, if a node with a low degree is connected only to nodes allocated to the same box, it will tend to move towards the centre of the same box, since that would maximise the likelihood of those edges appearing.
More generally, we argue that, while the overall macro-structure of the latent space (i.e. the association of nodes to boxes, or the association of nodes to subregions of the space) is properly recovered,
the micro-structure, given by the relative positions of the nodes within each box, may not necessarily be accurate.

Figure \ref{fig:astro_beta_theta} shows instead the posterior densities for the global parameters $\beta$ and $\theta$.
\begin{figure}[!htb]
\centering
\includegraphics[width=0.45\textwidth]{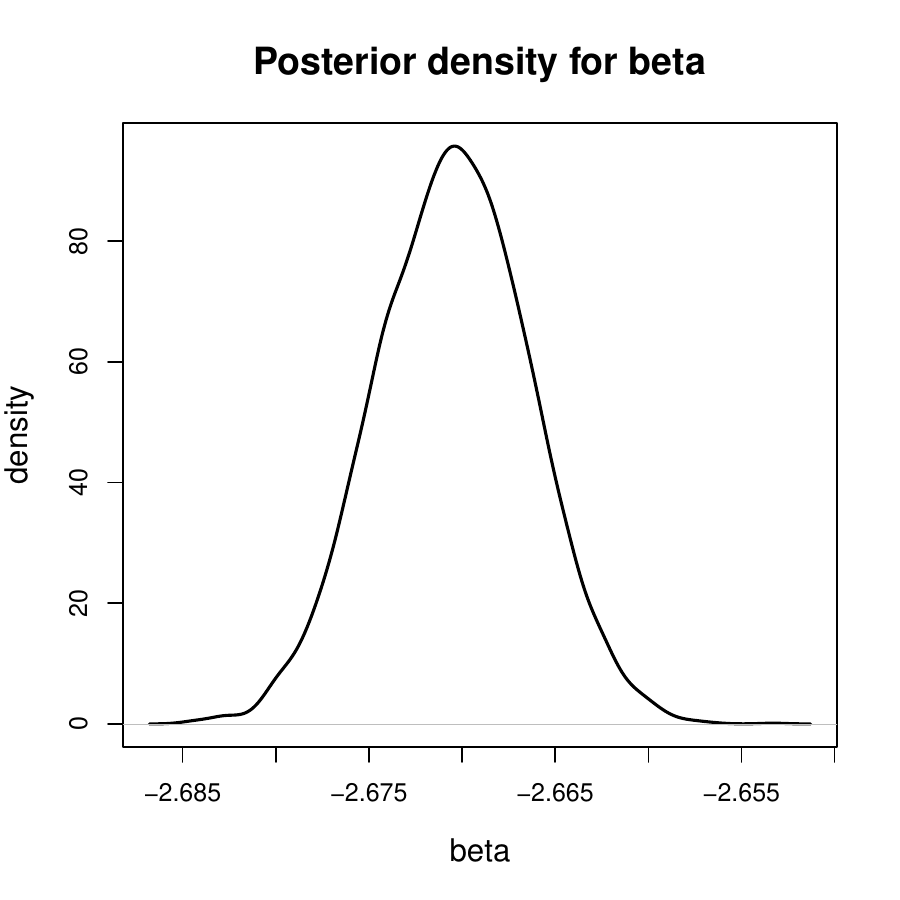}
\includegraphics[width=0.45\textwidth]{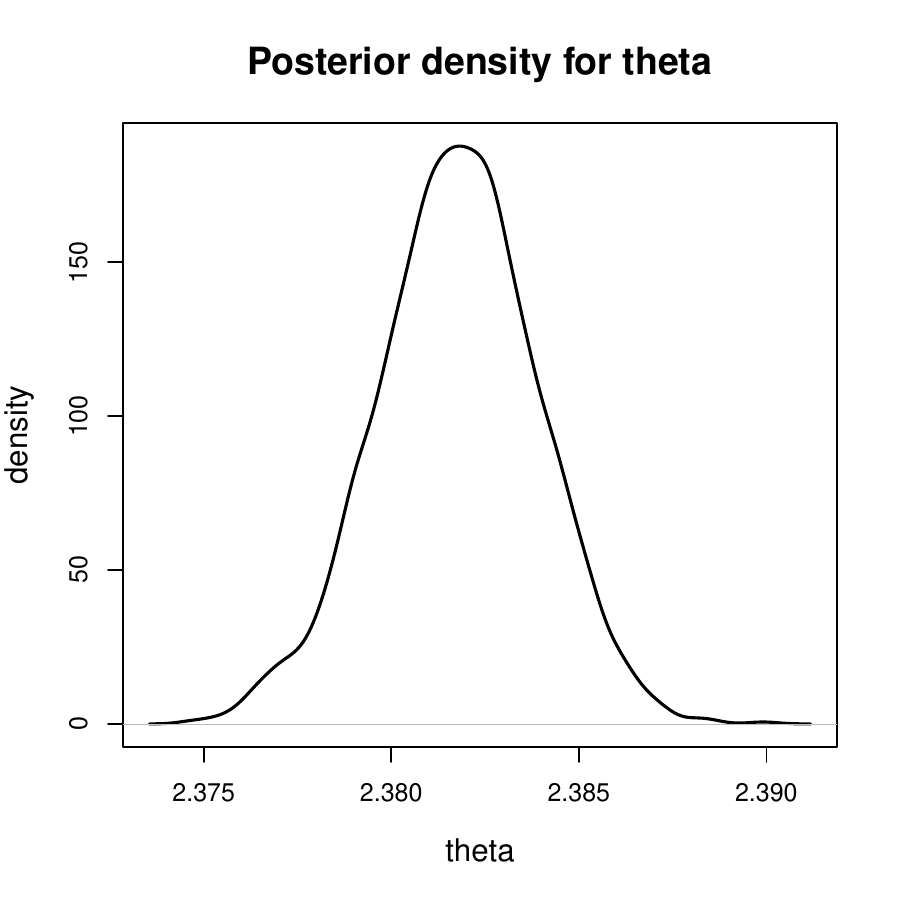}
\caption{\textbf{Astrophysics}. Posterior densities for the global parameters $\beta$ and $\theta$.}
 \label{fig:astro_beta_theta}
\end{figure}
We find the parameter $\theta$ to be rather large, signalling that the heterogeneity of the graph is well captured by the latent space.

The computing time required to obtain the sample was about $46$ hours ($3.3$ seconds per iteration).
After convergence of the Markov chain, we also ran the non-noisy MwG sampler for $50$ additional iterations, to compare the computational efficiency of the two methods.
The non-noisy MwG sampler required an average of $453$ seconds per iteration, corresponding to a theoretical $262$ days of computations for the full sample.
The vast difference between the two computing times highlights the scalability of our method, which extends the applicability of LPMs to networks of much larger sizes.

\subsection{Poisson case}
The coauthorship network data was in fact collected as a weighted graph, where the non-negative weights of edges represent the strength of the collaboration relations \footnote{Note that the edge weights are not exactly equal to the number of coauthored papers, however, they are constructed from this information through some rescaling. Details can be found in the paper by \textcite{newman2001scientific}. In our paper, we use these weights in a discretised way to fit our Poisson model.}.
This dataset offers then a nice setup to extend our methodology to a discrete Poisson setup, where edge weights are non-negative integers. The extension of our framework turns out to be rather straightforward, since the model can be defined as:
\begin{equation}\label{eq:poisson_case_1}
\begin{split}
 Y_{ij} &\sim Pois(\lambda_{ij}) \\
 \log \lambda_{ij} &= \beta - e^{\theta} d\left( \textbf{z}_{i}, \textbf{z}_{j} \right)
\end{split}
\end{equation}

As regards \texttt{NoisyLPM}, the implementation of the algorithm requires some modifications to take into account the fact that the grid does not partition the nodes into homogeneous boxes: the boxes now include different nodes which give different likelihood contributions based on their Poisson weight. To overcome this, we add an new dimension to our grid, whereby we split up each box into subgroups of nodes that carry the same Poisson weight. In other words, we further break down the partition to recover the homogeneity within each of the partition sets.

As a demonstration, we fit the model to the same data, this time using a grid obtained through $M = 8$: the latent space is shown in Figure \ref{fig:astro_w_z}.
\begin{figure}[!htb]
\centering
\includegraphics[width=\textwidth]{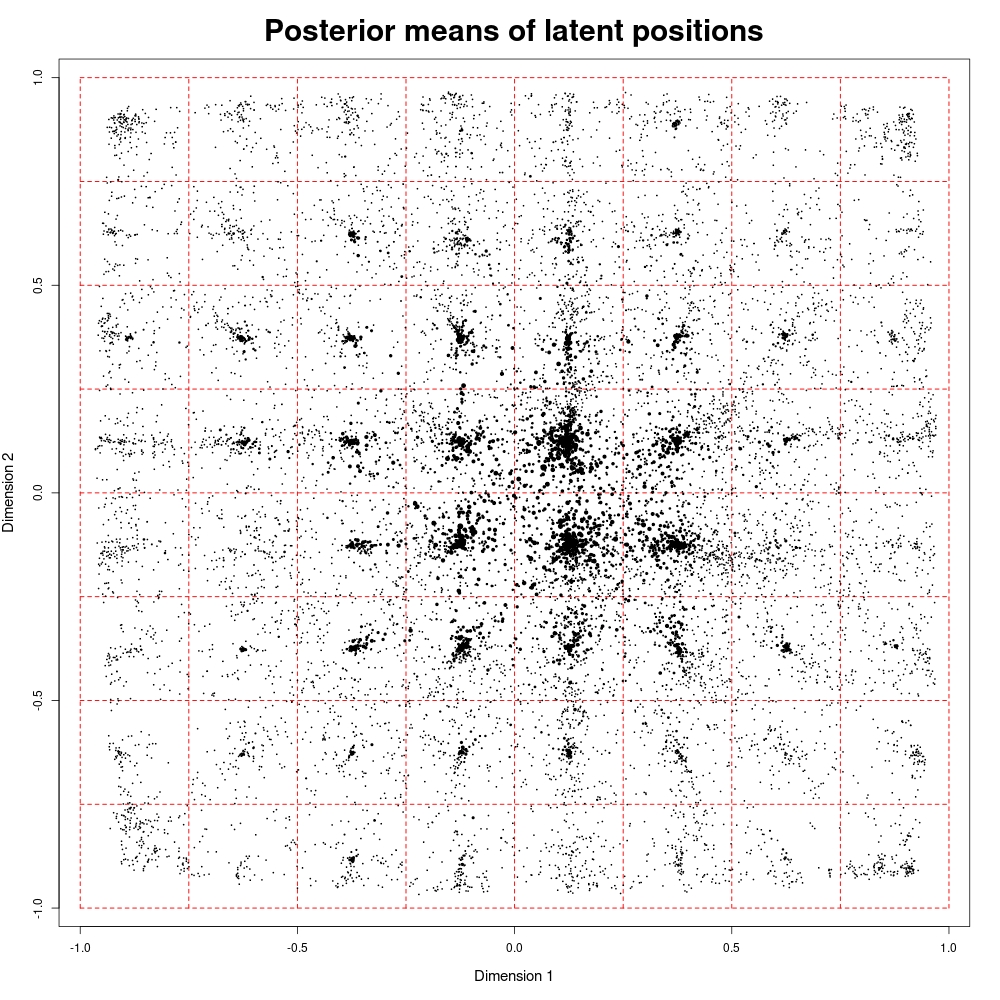}
\caption{\textbf{Astrophysics, weighted}. Average latent positions of the nodes with circle size proportional to the node's weighted degree. The grid in dashed red line corresponds to the partitioning imposed.}
 \label{fig:astro_w_z}
\end{figure}
The latent space for the weighted case appears to be equivalent to that obtained in the binary case, with a stronger clustering effect around the centres of the boxes, as a consequence of using a coarser grid. The posterior average of $\beta$ and $\theta$ are $-2.83$ and $2.34$, respectively.
This again signals a relatively strong latent space effect in the model, and thus a good fit to the data.

\section{Conclusions}
In this paper, we have introduced a new methodology to perform inference on latent position models.
Our approach specifically addresses a crucial issue: the scalability of the method with respect to the size of the network.
By taking advantage of a discretisation of the latent space, our proposed approach is characterised by a reduced computational complexity compared to the state of the art procedures.

The framework introduced heavily relies on several important results introduced in the context of noisy MCMC.
We have followed the core ideas of such strand of literature, and adapted the main results to the latent position model context, thereby giving theoretical guarantees for our proposed approximate method.
In particular, our results underline the existence of a trade-off between the speed and the bias of the noisy algorithm, whereby the user can arbitrarily increase the accuracy at the expense of speed.

Additionally, we have proposed applications to both simulated and real datasets.
When compared to the non-noisy algorithm, the noisy results did not show any relevant qualitative difference, yet they were obtained with a substantially smaller computing time.
While we describe and illustrate our method in the case of binary networks, we also provide an application to Poisson weighted networks, showing that the method could be extended to other common network structures.
The code for \texttt{NoisyLPM} is available from the public GitHub repository \textcite{noisylpm_github}.

A limitation of our work is that it might not cope well with an increasing number of latent dimensions.
This would imply an increase in the dimensionality of the grid, and, in turn, a much larger number of boxes.
Similarly, introducing nodal random effects or covariates would create additional heterogeneity within any given box, thus requiring further dimensions in the latent grid.
This would make our approach impractical.
However, we also point out that usually covariates can change the statistical analysis by diminishing the interpretability of the latent positions, since they may be critical features of the data.

Finally, our work can be easily extended to include different distributions on the latent space, such as Gaussian mixture models \parencite{handcock2007model} or different types of edge probabilities, or, more generally, networks factor models, such as the projection models of \textcite{hoff2002latent}.

\section*{Software}
The method has been implemented in \texttt{C++}, and it uses parallel computing through the library \texttt{OpenMPI}.
All of the computations described in the paper have been performed on a $8$-cores ($2.2$ GHz) Debian machine.
The code for \texttt{NoisyLPM} is available from the public GitHub repository \textcite{noisylpm_github}.

\section*{Acknowledgements}
Part of this research has been carried out while R. R. was affiliated with the Institute of Statistics and Mathematics,
Vienna University of Economics and Business, Vienna, Austria; and funded through the Vienna Science and Technology Fund (WWTF) Project MA14-031. This research was also supported by the Insight Centre for Data Analytics through Science Foundation Ireland grant SFI/12/RC/2289.

\printbibliography

\footnotesize
\appendix
\section{Appendix}
Before proceeding with two lemmas that will be useful in the proof of Proposition 2, we start with some elements of context.
Consider a state-space $(\Scal,\Acal)$ and define $\mathcal{M}_1$ the set of probability measures on $(\Scal,\Acal)$. A Markov kernel $P$ operates (from the left) on $\mathcal{M}_1$ by $\mu\mapsto \mu P:=\int \mu(\rmd \btheta)P(\btheta,\,\cdot\,)\in \mathcal{M}_1$. Define the total variation of $P$ as the number $\|P\|:=\sup_{\mu\in\mathcal{M}_1}\|\mu P\|=\sup_{\btheta\in\Scal}\|P(\btheta,\,\cdot\,)\|$ where for any measure $\mu$ defined on $(\Scal,\Acal)$, the positive number $\|\mu\|:=\sup_{A\in\Acal}|\mu(A)|$ denotes its total variation.
\begin{lemma}\label{lem1}
The total variation norm of a Markov kernel is $1$.
\end{lemma}
\begin{proof}
On the one hand, we have
$$
\|P\| = \sup_{\btheta\in\Scal}\|P(\btheta,\, \cdot\,)\| = \sup_{\btheta\in\Scal} \sup_{A \in \Acal} |P( \btheta, A )| \leq 1
$$
but, since $\Scal \in \Acal$, $ 1=P( \btheta, \Scal )\leq \sup_{A \in \Acal} P( \btheta, A) $,  for all $\btheta\in\Scal$, and we thus also have that
$$
1 = \sup_{\btheta\in\Scal} P( \btheta, \Scal ) \leq \sup_{\btheta\in\Scal} \sup_{A \in \Acal} |P( \btheta, A)| = \|P\| \ .
$$
\end{proof}

\noindent We now consider a signed Markov kernel $K$ which is the difference between two Markov kernels $P_1$ and $P_2$, that is  $K=P_1-P_2$. Define by $\mathcal{M}_0$ the set of signed measures on $(\Scal,\Acal)$ such that for all $\mu\in\mathcal{M}_0$, $\mu(\Scal)=0$. It is easy to see that $K$ operates (from the left) on $\mathcal{M}_0$ by $\mu\mapsto \mu K:=\int \mu(\rmd \btheta)K(\btheta,\,\cdot\,)\in \mathcal{M}_0$. Hence, we define the total variation of such a signed Markov kernel $K$ by $\|K\|=\sup_{\mu\in\mathcal{M}_0}\|\mu K\|=\sup_{\mu\in\mathcal{M}_0}\sup_{A\in \Acal}|\mu K(A)|$.
\begin{lemma}\label{lem2}
The total variation norm for signed kernels is submultiplicative: for two signed kernels $K_1$ and $K_2$ wih $K_1(\btheta,\Scal)=K_2(\btheta,\Scal)=0$, we have $\|K_1K_2\| \leq \|K_1\|\|K_2\|$.
\end{lemma}
\begin{proof}
First, note that 
$$
\sup_{\nu\in\mathcal{M}_0}\|\nu K_1K_2\|=\max\left[\sup_{\nu\in\mathcal{M}_0\,:\,\|\nu K_1\|>0}\|\nu K_1K_2\|,\sup_{\nu\in\mathcal{M}_0\,:\,\|\nu K_1\|=0}\|\nu K_1K_2\|\right]=\sup_{\nu\in\mathcal{M}_0\,:\,\|\nu K_1\|>0}\|\nu K_1K_2\|\,,
$$
since for all $\nu\in\mathcal{M}_0$, $\|\nu K_1\|=0$ implies that $\nu K_1$ is the null measure, which implies that $\nu K_1 K_2$ is also the null measure and thus that $\|\nu K_1 K_2\|=0$.
We have that
\begin{multline*}
  \|K_1K_2\|=\sup_{\nu\in\mathcal{M}_0}\|\nu K_1K_2\|=\sup_{\nu\in\mathcal{M}_0\,:\,\|\nu K_1\|>0}\|\nu K_1K_2\|=\sup_{\nu\in\mathcal{M}_0\,:\,\|\nu K_1\|>0}\frac{\|\nu K_1K_2\|}{\|\nu K_1\|}\|\nu K_1\|\\
  \leq \sup_{\nu\in\mathcal{M}_0\,:\,\|\nu K_1\|>0}\frac{\|\nu K_1K_2\|}{\|\nu K_1\|}\sup_{\nu\in\mathcal{M}_0}\|\nu K_1\|=\sup_{\nu\in\mathcal{M}_0\,:\,\|\nu K_1\|>0}\frac{\|\nu K_1K_2\|}{\|\nu K_1\|}\| K_1\|\,.
\end{multline*}
But note that for all $\nu\in\mathcal{M}_0$ with $\|\nu K_1\|>0$, we have for all $A\in\Acal$
$$
\frac{\nu K_1K_2}{\|\nu K_1\|}(A)=\iint \frac{\nu(\rmd \btheta) K_1(\btheta,\rmd \btheta')}{\|\nu K_1\|}K_2(\btheta',A)=
\int \nu'(\rmd \btheta')K_2(\btheta',A)=\nu' K_2(A)\,,
$$
where $ \nu'=\int \nu(\rmd \btheta)\frac{K_1(\btheta,\,\cdot\,)}{\|\nu K_1\|}$. This implies that
$\frac{\|\nu K_1K_2\|}{\|\nu K_1\|}=\|\nu' K_2\|$. But since $\nu'\in\mathcal{M}_0$,
$$
\frac{\|\nu K_1K_2\|}{\|\nu K_2\|}\leq \sup_{\nu'\in\mathcal{M}_0}\|\nu' K_2\|=\|K_2\|\,,
$$
which does not depend on $\nu$. The proof is completed by taking the supremum over all $\nu\in\mathcal{M}_0$ in the last inequality.
\end{proof}

\subsection{Proof of Proposition \ref{prop_noisy_corollary}}\label{app:prop_noisy_corollary}
\begin{proof}
By definition: $\|P - \tP\| = \sup_{\btheta\in \Scal}\|P\left( \btheta, \cdot \right) - \tP\left( \btheta, \cdot \right)\|$. Now, $P\left( \btheta, \cdot \right)$ and $\tP\left( \btheta, \cdot \right)$ are measures that admit a similar decomposition, and, in particular for any $\btheta\in\Scal$ and $A\in\Acal$,
\begin{equation}
 P\left( \btheta, A \right) = \int_A Q\left( \btheta, d\btheta' \right) \alpha\left( \btheta \rightarrow \btheta' \right) + \delta_{\btheta}\left( A \right)r\left( \btheta \right),
 \quad \quad
  \tP\left( \btheta, A \right) = \int_A Q\left( \btheta, d\btheta' \right) \tilde{\alpha}\left( \btheta \rightarrow \btheta' \right) + \delta_{\btheta}\left( A \right)\tilde{r}\left( \btheta \right)
\end{equation}
so that the signed measure $\mu_{\btheta} := P\left( \btheta, \cdot \right) - \tP\left( \btheta, \cdot \right)$ decomposes as:
\begin{equation}
 \mu_{\btheta}\left( A \right) = \int_A Q\left( \btheta, d\btheta' \right)\left( \alpha\left( \btheta \rightarrow \btheta' \right) - \tilde{\alpha}\left( \btheta \rightarrow \btheta' \right) \right)  + \delta_{\btheta}\left( A \right)\left( r\left( \btheta \right) - \tilde{r}\left( \btheta \right) \right) \ .
\end{equation}
It is well known that for this type of measure with atoms, the total variation verifies:
\begin{equation}
 \|\mu_{\btheta}\| = \frac{1}{2} \int_A Q\left( \btheta, d\btheta' \right)\left| \alpha\left( \btheta \rightarrow \btheta' \right) - \tilde{\alpha}\left( \btheta \rightarrow \btheta' \right) \right|  + \frac{1}{2}\left| r\left( \btheta \right) - \tilde{r}\left( \btheta \right) \right| \ .
\end{equation}
But since $r\left( \btheta \right) = \int Q\left( \btheta, d\btheta' \right)\left(1 - \alpha\left( \btheta \rightarrow \btheta' \right)\right)$, with a similar result for $\tilde{r}\left( \btheta \right)$, we have that:
\begin{equation}
 \left|r\left( \btheta \right) - \tilde{r}\left( \btheta \right)\right| = \left| \int Q\left( \btheta, d\btheta' \right)\left( \alpha\left( \btheta \rightarrow \btheta' \right) - \tilde{\alpha}\left( \btheta \rightarrow \btheta' \right) \right) \right|
 \leq
 \int Q\left( \btheta, d\btheta' \right)\left| \alpha\left( \btheta \rightarrow \btheta' \right) - \tilde{\alpha}\left( \btheta \rightarrow \btheta' \right) \right| \ .
\end{equation}
Since by assumption $\left| \alpha\left( \btheta \rightarrow \btheta' \right) - \tilde{\alpha}\left( \btheta \rightarrow \btheta' \right) \right|\leq \omega$, we have that:
\begin{equation}
 \| \mu_{\btheta}\| \leq \int Q\left( \btheta, d\btheta' \right)\left| \alpha\left( \btheta \rightarrow \btheta' \right) - \tilde{\alpha}\left( \btheta \rightarrow \btheta' \right)\right| \leq  \omega \ ,
\end{equation}
which is independent of $\btheta$ and thus concludes the proof.
\end{proof}

\subsection{Proof of Proposition \ref{prop_subadditive}}\label{app:prop_subadditive}
\begin{proof}
 If $R=1$ then $\|P_{[1]} - \tilde{P}_{[1]}\| = \|P_{1} - \tilde{P}_{1}\|$.
 If $R=2$, then, using the fact that the signed kernel $P_r - \tilde{P}_r$ satisfies conditions of Lemma \ref{lem2}, we have
 \begin{equation}
  \begin{split}
 \|P_{[2]} - \tilde{P}_{[2]}\| &= \|P_1P_2 - \tilde{P}_1\tilde{P}_2\| \\
 &= \|P_1(P_2-\tilde{P}_2) + \tilde{P}_2(P_1 - \tilde{P}_1)\| \\
 &\leq \|P_1(P_2-\tilde{P}_2)\| + \|\tilde{P}_2(P_1 - \tilde{P}_1)\| \\
 &\leq \|P_1\|\|P_2-\tilde{P}_2\| + \|\tilde{P}_2\|\|P_1 - \tilde{P}_1\| \\
 &= \|P_2 - \tilde{P}_2\| + \|P_1 - \tilde{P}_1\|\,,
  \end{split}
 \end{equation}
where last inequality follows from Lemma \ref{lem1}. Now, we assume that \eqref{eq:thm_scan_1} is valid for every $r \leq R-1$, and prove the statement for $r=R$.
We note that $P_{[R-1]}$ and $\tilde{P}_{[R-1]}$ are both Markov kernels. Hence:
 \begin{equation}
  \begin{split}
 \|P_{[R]} - \tilde{P}_{[R]}\| &= \|P_RP_{[R-1]} - \tilde{P}_R\tilde{P}_{[R-1]}\| \\
 &\leq \|P_R(P_{[R-1]}-\tilde{P}_{[R-1]})\| + \|\tilde{P}_{[R-1]}(P_R - \tilde{P}_R)\| \\
 &= \|P_R - \tilde{P}_R\| + \|P_{[R-1]} - \tilde{P}_{[R-1]}\|
  \end{split}
 \end{equation}
proving the proposition by mathematical induction.
\end{proof}

\subsection{Preliminary results}
We define here two independent results which are needed to prove the main theorem of our paper.

\begin{lemma}\label{app:lemma_1}
Let $x,\ y \in \mathbb{R}^+$, then:
\begin{equation}
|1\wedge x - 1\wedge y| \leq |\log x - \log y| \ .
\end{equation}
\end{lemma}
\begin{proof}
 We note that:
 \begin{equation}
  |1\wedge x - 1\wedge y| \leq |\log (1\wedge x) - \log (1\wedge y)| \ ,
 \end{equation}
because the logarithm function acts as a expansive mapping locally in the set $[0,1]$, or, equivalently, the exponential function is a contraction in the set $[0,1]$.

Then, we can proceed with:
 \begin{equation}
  |\log (1\wedge x) - \log (1\wedge y)| \leq |1\wedge \log x - 1\wedge \log y| \ .
 \end{equation}

The following cases are now possible:
\begin{enumerate}
 \item If $x\geq 1$ and $y \geq 1$, then $|1\wedge \log x - 1\wedge \log y| \leq 0$.
 \item If $x < 1$ and $y < 1$, then $|1\wedge \log x - 1\wedge \log y| = |\log x - \log y|$.
 \item If $x < 1$ and $y \geq 1$ (or, analogously, $x \geq 1$ and $y < 1$), then
 $$
 |1\wedge \log x - 1\wedge \log y| \leq |\log x - 0| \leq |\log x - \log y| \ ,
 $$
 because $\log x$ is a negative number.
\end{enumerate}
 As a worst case scenario across the above cases, we have the required statement
 $$
 |1\wedge x - 1\wedge y| \leq |\log x - \log y| \ .
 $$
\end{proof}

\begin{lemma}\label{app:lemma_2}
 Let $x$, $y$ and $z$ be real numbers such that $x \wedge y \geq z > 0$. Then:
 \begin{equation}
  |\log x - \log y| \leq \frac{1}{z}|x - y| \ .
 \end{equation}
\begin{proof}
 We assume, without loss of generality, that $x \geq y$. We have:
 \begin{equation}
  |\log x - \log y|
  = \left|\log \frac{x}{y}\right|
  = \left|\log \frac{y+x-y}{y}\right|
  = \left|\log \left(1 + \frac{x-y}{y}\right)\right|
  \leq \left|\frac{x-y}{y}\right|
  \leq \frac{1}{y}|x-y| \ ;
 \end{equation}
 where we have used the fact that $\log (1+u) \leq u$ for any $u \geq 0$.

 In the case of $y \geq x$, the inequality becomes
 \begin{equation}
  |\log x - \log y|
  \leq \frac{1}{x}|x-y| \ ;
 \end{equation}
so, we can combine the two inequalities to obtain the statement of the lemma, for any $z \leq x \wedge y$.
\end{proof}
\end{lemma}

\begin{proposition}\label{app:proposition_1}
 If the edge probability function $\rho\left( d, \boldsymbol{\psi} \right)$ defined in Assumption \ref{assumption_edge_probability} satisfies:
 $$
 \left|\rho\left( d, \boldsymbol{\psi} \right) - \rho\left( \tilde{d}, \boldsymbol{\psi} \right)\right| \leq \kappa_0|d-\tilde{d}| \ ,
 $$
 for some distances $d$ and $\tilde{d}$, model parameters $\boldsymbol{\psi}$, and positive constant $\kappa_0$, then it also satisfies:
 $$
 \left|\log \rho\left( d, \boldsymbol{\psi} \right) - \log \rho\left( \tilde{d}, \boldsymbol{\psi} \right)\right| \leq \kappa_1|d-\tilde{d}| \ ;
 $$
 for a suitable positive constant $\kappa_1$.
\end{proposition}
\begin{proof}
Thanks to Lemma \ref{app:lemma_2}, we have:
 $$
 \left|\log \rho\left( d, \boldsymbol{\psi} \right) - \log \rho\left( \tilde{d}, \boldsymbol{\psi} \right)\right|
 \leq \frac{1}{\rho\left( d, \boldsymbol{\psi} \right) \wedge \rho\left( \tilde{d}, \boldsymbol{\psi} \right)}
 \left|\rho\left( d, \boldsymbol{\psi} \right) - \rho\left( \tilde{d}, \boldsymbol{\psi} \right)\right|
 \leq \frac{1}{\rho^\mathcal{L}}
 \left|\rho\left( d, \boldsymbol{\psi} \right) - \rho\left( \tilde{d}, \boldsymbol{\psi} \right)\right| \ ;
 $$
 where $\rho^\mathcal{L} = \inf\limits_{d\geq 0}\left\{ \rho\left( d, \boldsymbol{\psi} \right) \right\}$.
We can now use the Proposition's assumption to obtain the result.
\end{proof}

\begin{corollary}\label{app:corollary_1}
 Analogously to the previous proposition, we have that
 $$
 \left|\log \left[1-\rho\left( d, \boldsymbol{\psi} \right)\right] - \log \left[1-\rho\left( \tilde{d}, \boldsymbol{\psi} \right)\right]\right| \leq \kappa_2|d - \tilde{d}| \ ,
 $$
 for a suitable positive constant $\kappa_2$.
\end{corollary}

\subsection{Proof of Theorem \ref{theorem_acceptance_probs_1}}\label{app:theorem_acceptance_probs_1}
\begin{proof}
 First we use Lemma \ref{app:lemma_1} to obtain:
 \begin{equation}\label{theorem_acceptance_probs_1_eq_1}
  \left| \alpha_{\mathcal{Z}}\left( \textbf{z}_i\rightarrow \textbf{z}_i'\right) - \tilde{\alpha}_{\mathcal{Z}}\left( \textbf{z}_i\rightarrow \textbf{z}_i'\right)\right|
  \leq \left|
  \log \frac{\pi\left( \textbf{z}_i' \middle \vert \mathcal{Z}_{-i},\bpsi,\mathcal{Y} \right)}
  {\pi\left( \textbf{z}_i \middle \vert \mathcal{Z}_{-i},\bpsi,\mathcal{Y} \right)}
  - \log \frac{\tilde{\pi}\left( \textbf{z}_i' \middle \vert \mathcal{Z}_{-i},\bpsi,\mathcal{Y} \right)}
  {\tilde{\pi}\left( \textbf{z}_i \middle \vert \mathcal{Z}_{-i},\bpsi,\mathcal{Y} \right)}
  \right| \ .
 \end{equation}

 Simplify the notation with:
 \begin{align*}
 &p_{ij} = p\left( \textbf{z}_i, \textbf{z}_j; \boldsymbol{\psi} \right) \\
 &p_{ij}' = p\left( \textbf{z}_i', \textbf{z}_j; \boldsymbol{\psi} \right) \\
 &q_{ij} = 1 - p\left( \textbf{z}_i, \textbf{z}_j; \boldsymbol{\psi} \right) \\
 &q_{ij}' = 1 - p\left( \textbf{z}_i', \textbf{z}_j; \boldsymbol{\psi} \right) \\
\end{align*}
with approximate counterparts indicated with a tilde, respectively.

 We note that in the right hand side of \eqref{theorem_acceptance_probs_1_eq_1}, all the terms referring to the prior and proposal distributions cancel each other out. The only remaning terms are:

\begin{equation}
\begin{split}
&\left| \alpha_{\mathcal{Z}}\left( \textbf{z}_i\rightarrow \textbf{z}_i'\right) - \tilde{\alpha}_{\mathcal{Z}}\left( \textbf{z}_i\rightarrow \textbf{z}_i'\right)\right|
\leq  \\
&\hspace{2cm}\leq\left|\sum_{j\in\mathcal{V}\backslash\{i\}} y_{ij}\left[ \log p_{ij}' - \log \tilde{p}_{ij}' - \log p_{ij} + \log \tilde{p}_{ij}\right] + \right. \\
&\hspace{4cm}+ \left. \sum_{j\in\mathcal{V}\backslash\{i\}} (1-y_{ij})\left[ \log q_{ij}' - \log \tilde{q}_{ij}' - \log q_{ij} + \log \tilde{q}_{ij} \right] \right| \\
&\hspace{2cm}\leq\sum_{j\in\mathcal{V}\backslash\{i\}} y_{ij}\left[ \left|\log p_{ij}' - \log \tilde{p}_{ij}'\right| + \left|\log p_{ij} - \log \tilde{p}_{ij}\right|\right] +   \\
&\hspace{4cm}+  \sum_{j\in\mathcal{V}\backslash\{i\}} (1-y_{ij})\left[\left| \log q_{ij}' - \log \tilde{q}_{ij}' \right| + \left|\log q_{ij} - \log \tilde{q}_{ij} \right|\right] \ .
\end{split}
\end{equation}

We note that each of the terms in absolute values are upper bounded by $\kappa b$, for some positive constant $\kappa$. This follows from Proposition \ref{app:proposition_1}, whereby $|d - \tilde{d}| \leq b\sqrt{2}$ by construction. Hence, we can conclude with
\begin{equation}
 \left| \alpha_{\mathcal{Z}}\left( \textbf{z}_i\rightarrow \textbf{z}_i'\right) - \tilde{\alpha}_{\mathcal{Z}}\left( \textbf{z}_i\rightarrow \textbf{z}_i'\right)\right| \leq \kappa' b N \ .
\end{equation}
The proof for the global parameters is done analogously.

\end{proof}

\subsection{Uniform convergence of Metropolis-within-Gibbs kernels operating on a compact state space}\label{app:thm_uniform}

\begin{theorem}\label{thm_uniform}
Let $\Scal$ be a bounded state space with $\Scal\subset \mathbb{R}^d$ (for some $d>0$) and $\mathcal{A}$ be a sigma-algebra on $\Scal$. Let $P$ be a Gibbs kernel operating on $\Scal\times\Acal$ with invariant distribution $\pi$ defined on $(\Scal,\Acal)$. Then the function $u\mapsto \|P(u,,\cdot)^t-\pi\|$ converges uniformly to $0$ as $t\to\infty$, at a geometric rate.
\end{theorem}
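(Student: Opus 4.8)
The plan is to prove the stronger statement that $P$ satisfies a one-step Doeblin (uniform minorisation) condition, and then invoke the classical fact that such a condition forces uniform convergence to $\pi$ at a geometric rate. Write $\Scal=\Scal_1\times\cdots\times\Scal_R$ for the decomposition of the bounded state space into the coordinate blocks that the scan kernel $P=P_1\cdots P_R$ updates in turn, and recall that each elementary Metropolis-within-Gibbs kernel takes the form
\begin{equation*}
 P_r(u,\mathrm{d}u') = \alpha_r(u_r\to u_r')\,q_r(u_r\to u_r')\,\mathrm{d}u_r'\,\delta_{u_{-r}}(\mathrm{d}u'_{-r}) + \bigl(1-a_r(u)\bigr)\delta_u(\mathrm{d}u'),
\end{equation*}
with $a_r(u)=\int\alpha_r q_r$; for a plain Gibbs update one argues identically with $P_r(u,\cdot)$ replaced by the full-conditional density.

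First I would show that, on the compact blocks, the acceptance probabilities and the proposal densities are bounded below by strictly positive constants. Under Assumptions \ref{assumption_bounded_support}--\ref{assumption_edge_probability} the likelihood and every likelihood ratio $\mathcal{LR}$ are bounded above and bounded away from zero (the edge probabilities lie in $[p^{\mathcal{L}},p^{\mathcal{U}}]\subset(0,1)$), and the prior ratios are bounded by Assumption \ref{assumption_bounded_prior}; together with the bounded proposal ratios of Assumption \ref{assumption_proposal} this forces $\alpha_r(u_r\to u_r')\geq\underline{\alpha}>0$ uniformly over $r$ and over the compact parameter space. Likewise, a random-walk proposal based on a truncated Gaussian --- or any proposal with a continuous, everywhere-positive density --- restricted to a compact block satisfies $q_r(u_r\to u_r')\geq\underline{q}_r>0$ uniformly. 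Discarding the rejection atom then yields the block-wise minorisation $P_r(u,\cdot)\geq\delta_{u_{-r}}\otimes\mu_r$, where $\mu_r:=\underline{\alpha}\,\underline{q}_r\,\mathrm{Leb}_{\Scal_r}$ is a nonzero measure on $\Scal_r$ not depending on $u$.

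Next I would compose the $R$ block updates and observe that the Dirac factors telescope: integrating the product $P_1\cdots P_R$ against these block-wise lower bounds, the $\delta_{u_{-r}}$ term produced by step $r$ is overwritten coordinate by coordinate as subsequent blocks are refreshed, so that $P(u,\cdot)=P_1\cdots P_R(u,\cdot)\geq\mu_1\otimes\cdots\otimes\mu_R=:\varepsilon\,\nu$, where $\nu$ is a probability measure on $\Scal$ and $\varepsilon=\prod_{r=1}^{R}\mu_r(\Scal_r)\in(0,1]$ is independent of $u$; it suffices to check this telescoping for $R=2$ and induct on $R$, in the same style as the proof of Theorem \ref{theorem_scan}. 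From the one-step minorisation $P(u,\cdot)\geq\varepsilon\,\nu$ the standard coupling argument gives $\sup_{u\in\Scal}\|P^t(u,\cdot)-\pi\|\leq(1-\varepsilon)^t$ for every $t$, which is exactly the claimed uniform geometric convergence, and shows that Assumption \ref{assumption_uniform} holds with $C=1$ and $\tau=1-\varepsilon$.

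The main obstacle is the first step, namely the uniform positive lower bound on the acceptance probabilities; this is where the compactness of the state space (Assumption \ref{assumption_bounded_support}) and the boundedness Assumptions \ref{assumption_bounded_prior}--\ref{assumption_proposal} are genuinely used, and it is the only place where the particular structure of the LPM enters. A secondary technical point is that the Metropolis kernels are not pure density kernels but carry a rejection atom; since we only need a lower bound this is harmless (we simply drop the atom), and by the same token the conclusion is insensitive to the scan order, exactly as noted in the remark following Theorem \ref{theorem_scan}.
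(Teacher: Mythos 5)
Your proposal is correct and follows essentially the same route as the paper's proof: a one-step Doeblin minorisation obtained by lower-bounding the proposal densities and acceptance probabilities on the compact space, dropping the rejection atom, and letting the Dirac factors telescope through the scan $P_1\cdots P_R$, after which uniform geometric ergodicity follows from the standard small-set/coupling result. The only cosmetic difference is that you derive the uniform lower bound on the acceptance probabilities from the LPM-specific Assumptions \ref{assumption_bounded_support}--\ref{assumption_proposal}, whereas the paper argues generically via $0<\upi\leq\pi\leq\bpi<\infty$ and bounded proposal densities, which amounts to the same use of compactness.
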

\def\ualpha{\underline{\alpha}}
\begin{proof}
For simplicity, we take the case $d=3$, but generalizing the following reasoning for all $d>0$ is straightforward.
Denoting with $P_i$ the MwG kernel that keeps $x_{-i}:=(x_1,\ldots,x_{i-1},x_{i+1},\ldots,x_d)$ fixed, we have for all $x\in\Scal$
\begin{equation}
\label{eq:unif1}
P_i(x,\rmd x')=\left\{Q_i(x,\rmd x_i')\alpha_i(x,x')+\delta_{x_i}(\rmd x'_i)\rho_i(x)\right\}\delta_{x_{-i}}(\rmd x'_{-i})\,,
\end{equation}
where $Q_i$ is the proposal kernel of the $i$-th dimension, $\alpha_i(x,x')=1\wedge \pi(x')Q_i(x'x)\slash \pi(x)Q_i(x,x')$ and $\rho_i(x)=1-\int Q(x,\rmd x')\alpha_i(x,x')$. With regulatory conditions on the proposal kernels $Q_1,Q_2,\ldots$ and since the state space is compact, we have for all $i\in\{1,\ldots,d\}$:
\begin{equation}
\label{eq:unif2}
\bQ_i:=\sup_{(x,y)\in\Scal^2}Q_i(x,y)<\infty\,,\qquad \uQ_i:=\inf_{(x,y)\in\Scal^2}Q_i(x,y)>0\,.
\end{equation}
Moreover, since the pdf of $\pi$ is a continuous function and $\Scal$ is bounded, we have:
\begin{equation}
\label{eq:unif3}
0<\upi\leq \pi(x)\leq\bpi<\infty\,.
\end{equation}
Assuming that, for all $i$, $Q_i$ is absolutely dominated by a common dominating measure, we have that $Q_i(x,\rmd x_i')=Q(x,x_i')\rmd x_i'$ which combined with Eqs. \eqref{eq:unif1}, \eqref{eq:unif2} and \eqref{eq:unif3} yields
\begin{equation}
P_i(x,\rmd x')\geq Q_i(x,x_i')\alpha_i(x,x')\delta_{x_{-i}}(\rmd x'_{-i})\rmd x_i'\geq \uQ_i\ualpha_i\delta_{x_{-i}}(\rmd x'_{-i})\rmd x_i'\,,
\end{equation}
where $\ualpha_i:=\upi\uQ_i/\bpi\bQ_i$. Now, the (systematic-scan) Metropolis-within-Gibbs transition kernel writes
\begin{eqnarray*}
P(x,\rmd x'):=P_1P_2P_3(x,\rmd x')\hspace{-0.6cm}&&=\int P_1(x,\rmd y) P_2P_3(y,\rmd x')\,,\\
\hspace{-0.6cm}&&\geq \int \uQ_1 \ualpha_1\rmd y_1P_2P_3(y_1,x_2,x_3,\rmd x')\,,\\
\hspace{-0.6cm}&&\geq \int \uQ_1 \ualpha_1\rmd y_1\int \uQ_2\ualpha_2 \rmd z_2 P_3(y_1,z_2,x_3,\rmd x')\,,\\
\hspace{-0.6cm}&&\geq \int \uQ_1 \ualpha_1\rmd y_1\int \uQ_2\ualpha_2 \rmd z_2 \uQ_3 \left\{1\wedge \frac{\pi(x')\bQ_3}{\upi \uQ_3}\right\}\rmd x_3'\delta_{y_1}(\rmd x'_1)\delta_{z_2}(\rmd x'_2)\,,\\
\hspace{-0.6cm}&&\geq \left\{\prod_{i=1}^2\uQ_i\ualpha_i\right\}\uQ_3\left\{1\wedge \frac{\pi(x')\uQ_3}{\bpi \bQ_3}\right\} \rmd x'\,,
\end{eqnarray*}
since $\iint \rmd y_1\delta_{y_1}(\rmd x_1')\rmd z_2\delta_{z_2}(\rmd x_2')=\rmd x_1'\rmd x_2'$. Hence, defining $\nu$ as the absolutely continuous probability measure with pdf $\nu(x)\propto 1\wedge {\pi(x')\uQ_3}\slash{\bpi \bQ_3}$, we have
\begin{equation}
\label{eq:unif4}
P(x,\rmd x')\geq \beta \nu(\rmd x')\,,
\end{equation}
with $\beta:=\left\{\prod_{i=1}^2\uQ_i\ualpha_i\right\}\uQ_3\int 1\wedge {\pi(x')\uQ_3}\slash{\bpi \bQ_3}\rmd x'$. We conclude from Eq. \eqref{eq:unif4} that the whole state space $\Scal$ is small for $P$ and that therefore $P$ is uniformly ergodic (with geometric rate $1-\beta$), see e.g. Theorem 8 in \textcite{roberts2004general}.
\end{proof}
\end{document}